\newtheorem{assumption}{Assumption}
\newtheorem{lemma}{Lemma}
\newtheorem{remark}{Remark}
\newtheorem{theorem}{Theorem}
\newcommand{\approptoinn}[2]{\mathrel{\vcenter{
			\offinterlineskip\halign{\hfil$##$\cr
				#1\propto\cr\noalign{\kern2pt}#1\sim\cr\noalign{\kern-2pt}}}}}
\xpatchcmd{\proof}{\topsep0\p@\@plus0\p@\relax}{}{}{}
\pgfplotsset{width=\columnwidth, compat = 1.13, 
	height = 3.cm, grid= major, 
	legend cell align = left, ticklabel style = {font=\scriptsize},
	every axis label/.append style={font=\small},
	legend style = {font=\scriptsize\sffamily},title style={yshift=-7pt, font = \small} }
\definecolor{green}{rgb}{0.4660,0.6740,0.1880}
\definecolor{red}{rgb}{0.5,0,0}
\tikzset{cross/.style={cross out, draw=black, minimum size=2*(#1-\pgflinewidth), inner sep=0pt, 
		outer sep=0pt},
	cross/.default={1pt}}
\title{\LARGE \bf Networked Online Learning for Control of Safety-Critical Resource-Constrained Systems based on Gaussian Processes}
\author{Armin Lederer, Mingmin Zhang, Samuel Tesfazgi and Sandra Hirche
	\thanks{The authors are with the Department of Electrical and Computer Engineering, Technical University of Munich, 80333 Munich, Germany
		{\tt\small [armin.lederer, mingmin.zhang,
			samuel.tesfazgi, hirche]@tum.de}}%
}
\begin{document}
	
	\setlength{\textfloatsep}{11pt}
	\setlength{\floatsep}{2pt}
	\setlength{\abovedisplayskip}{5pt}
	\setlength{\belowdisplayskip}{5pt}
	
	\maketitle
	\thispagestyle{empty}
	\pagestyle{empty}

	\begin{abstract}
		Safety-critical technical systems operating in unknown 
		environments require the ability to quickly adapt their behavior, which can be achieved in control by inferring a model online from the data stream generated during operation. Gaussian process-based learning is particularly well suited for safety-critical applications as it ensures bounded prediction errors. 
		While there exist computationally efficient approximations for online inference, these approaches lack guarantees for the prediction error and have high memory requirements, and are therefore not applicable to safety-critical systems with tight memory constraints.
		In this work, we propose a novel networked online learning approach based on Gaussian process regression, which addresses the issue of limited local resources by employing remote data management in the cloud. 
		Our approach formally guarantees a bounded tracking error with high probability, which is exploited to identify the most relevant data to achieve a certain control performance. We further propose an effective data transmission scheme between the local system and the cloud taking bandwidth limitations and time delay of the transmission channel into account. The effectiveness of the proposed method is successfully demonstrated in a simulation.\looseness=-1
		
	\end{abstract}

	\section{INTRODUCTION}

	Technical systems are required to operate increasingly autonomously in uncertain environments. For ensuring safety and high performance, these systems need to be able to infer models from observed data online, such that they can quickly adapt to new situations. This is particularly important in applications such as the safe control of autonomous underwater vehicles~\cite{Sahoo2019}, unmanned aerial vehicles~\cite{Andersson2017} and wearable robots~\cite{Martinez-Hernandez2021}, where uncertainty arising from humans in the control loop and changing environments can prevent the derivation of accurate models prior to system operation.

	Gaussian process (GP) regression is a supervised machine learning method, which is commonly employed in highly nonlinear, safety-critical applications due to its high expressiveness and probabilistically bounded prediction errors~\cite{Rasmussen2006}. Even though it admits closed-form updates allowing online learning and thereby an iterative adaptation of inferred models, it exhibits a quadratic update complexity in the number of training samples. Therefore, it becomes too slow for processing streaming data generated during system operation in real-time, since controllers often run at sampling rates in the magnitude of \SI{e2}{Hz} to \SI{e3}{Hz} and consequently measurements quickly accumulate to large data sets, which render exact inference computationally intractable \cite{9-4_deisenroth2015distributed}. 
	In order to reduce the complexity of GPs, several approximations for online learning have been developed, which include inducing point methods \cite{3-4_huber2014recursive}, variational inference approaches \cite{Bui2017} 
	and finite feature approximations~\cite{4-4_gijsberts2013real}. While these approaches can yield computation times low enough for online learning in control, beneficial safety-relevant theoretical properties of exact GPs such as uniform error bounds \cite{0-5_lederer2019uniform} 
	do not directly extend to them, and thus, they cannot be used in safety-critical applications. In addition, those approaches exhibit a linear or even higher order polynomial memory complexity, which prohibits their application in resource-constrained technical systems such as drones, autonomous underwater vehicles or wearable robots with limited memory for storing data. In summary, there is a significant gap between the principle potential of GPs and their realistic application in safety-critical systems.\looseness=-1
	
	
		\begin{figure}
			\vspace{0.15cm}
			\centering
			\includegraphics[scale=0.38]{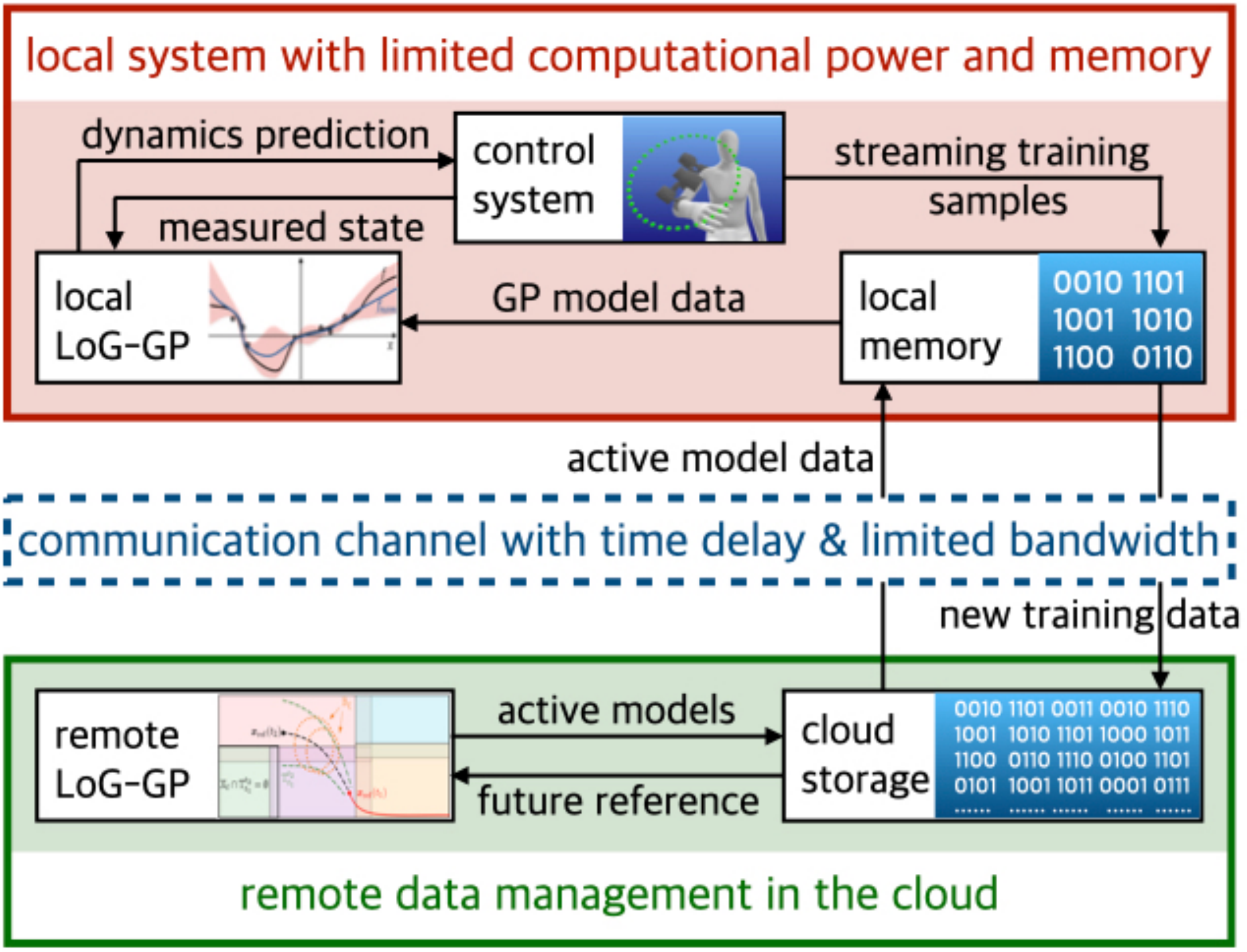}
			\vspace{-0.35cm}
			\caption{Overview of the proposed networked online learning architecture: The LoG-GP predicts the unknown dynamics, e.g., of a wearable robot, for a measured state. For computing these predictions, it can only access GP model data in the local memory. Measurements of the system are continuously stored in the local memory and regularly sent to the cloud, where necessary models for a future reference trajectory are determined using a sampling based approach and corresponding data is sent to the local memory.\looseness=-1
			}
			\label{fig:overview}
		\end{figure}
	
	This paper addresses the problem of online learning control for safety-critical systems with limited computational and memory resources. We exploit the fact that our envisioned applications are able to communicate with external infrastructure including clouds with potentially unlimited data storage. Realistic network restrictions such as time delays and limited bandwidth prevent the full externalization of the online model inference to the cloud. Therefore, we propose to learn models locally based on suitably small data sets and regularly exchanging relevant data between the local computation unit and the remote cloud as illustrated in \cref{fig:overview}. 
	Our approach employs locally growing random trees of GPs (LoG-GP)\footnote{Open-source software packages for several programming languages available at \url{https://gitlab.lrz.de/online-GPs/LoG-GPs}.} developed by the authors, which achieve logarithmically increasing update and prediction complexities while simultaneously inheriting error bounds from exact GPs~\cite{0-1_lederer2021gaussian}.
	In order to realize the data transfer without any loss in control performance, the local activity of Gaussian process models in LoG-GPs is exploited by selecting the most relevant local models for the immediate future. Based on a tracking error bound from GP-based learning control, we employ a sampling based approach to efficiently determine potentially relevant GP models with high probability,  such that only necessary data needs to remain in the local memory. 
	We ensure the timely availability of required data on the local system using an effective transmission scheme, which provides insight on fundamental trade-offs between the bandwidth, time delays, local memory and achievable tracking error. For demonstrating the effectiveness of the developed method, we exemplarily derive a novel tracking error bound for a feedback linearizing controller using a LoG-GP model, and evaluate the performance of the networked online learning approach in simulations of a robotic exoskeleton.

	The remainder of this paper is structured as follows: Section~\ref{sec:prob} formally describes the considered problem, followed by the proposed networked online learning method based on GPs in \cref{sec:main}. In \cref{sec:trackbound}, a tracking error bound for an online learning feedback linearizing control law is exemplarily derived, such that the effectiveness of the networked online learning approach can be demonstrated in \cref{sec:eval}, before the paper is concluded in \cref{sec:conc}.

	\section{PROBLEM DESCRIPTION}\label{sec:prob}

	Since accurate models for many systems such as autonomous underwater vehicles and wearable robots are often not available in practice, we consider the problem of inferring a dynamics model online from measurements generated during operation, such that the tracking performance of model-based control can be improved. 
	Formally, we model these systems with differential equations of the form\footnote{Notation: Lower/upper case bold symbols denote vectors/matrices, 
		$\mathbb{R}_+$/$\mathbb{R}_{0,+}$ all real positive/non-negative 
		numbers, $\bm{I}_n$ the $n\!\times\! n$ identity matrix, $\|\cdot\|$ the Euclidean norm, $|\mathbb{D}|$ the cardinality of a set $\mathbb{D}$, and  
		$\lceil\cdot\rceil$/$\lfloor\cdot\rfloor$ the ceil/floor operator.}\looseness=-1
	\begin{align}\label{equ:system}
		\dot{\bm{x}}=\bm{f}(\bm{x},\bm{u}),
	\end{align}
	where $\bm{x}\in\mathbb{X}\subset\mathbb{R}^{d_x}$ denotes the state, $\bm{u}\in\mathbb{U}\subset\mathbb{R}^{d_u}$ is the control input, and $\bm{f}:\mathbb{X}\times\mathbb{U}\rightarrow\mathbb{R}^{d_x}$ is the unknown dynamics function. We consider the task of tracking a bounded, continuously differentiable reference trajectory $\bm{x}_{\mathrm{ref}}:\mathbb{R}_{0,+}\rightarrow\mathbb{X}$ with the system state~$\boldsymbol{x}(t)$. For this purpose, we employ a model-based control law $\bm{\pi}_{\hat{f}}:\mathbb{X}\rightarrow\mathbb{U}$, where $\hat{\bm{f}}:\mathbb{X}\times\mathbb{U}\rightarrow\mathbb{R}^d$ is a model of the unknown function $\bm{f}(\cdot)$. 
	The tracking performance of such a control law typically depends strongly on the accuracy of the model $\hat{\bm{f}}(\cdot)$, such that we employ the following assumption on the model-based control law $\bm{\pi}_{\hat{f}}(\cdot)$, which is satisfied by many control techniques such as feedback linearization~\cite{0-6_umlauft2019feedback}, backstepping \cite{Capone2019} and adaptive control \cite{Gahlawat2020a}.
	
	\begin{assumption}\label{ass:errbound}
		The tracking error $\bm{e}(t)=\bm{x}-\bm{x}_{\mathrm{ref}}(t)$ is ultimately bounded with monotonously increasing ultimate bound $\vartheta:\mathbb{R}_{0,+}\rightarrow\mathbb{R}_{0,+}$, i.e., for every $c\in\mathbb{R}_+$, there exists a time $T=T(c,\vartheta)$, such that it holds that
		\begin{align}
		\|\bm{e}(0)\|\leq c \qquad \Rightarrow \qquad \|\bm{e}(t)\|\leq \vartheta(\kappa_t),~ \forall t\geq T,
		\end{align}
		where $\kappa_t=\max_{t'\in [0,t]}\|\bm{f}(\bm{x}(t'))-\hat{\bm{f}}(\bm{x}(t'))\|$.
	\end{assumption}
	For notational simplicity, we assume no knowledge of~$\bm{f}(\cdot)$ before system operation, but considering a prior model $\hat{\bm{f}}_{0}(\cdot)$ is straightforward \cite{0-6_umlauft2019feedback}. 
	In order to infer a model~$\hat{\bm{f}}(\cdot)$ online, we require periodical measurements of the system.
	\looseness=-1
	\begin{assumption}\label{ass:data}
		Data pairs $(\bm{x}_n,\bm{y}_n\!=\!f(\bm{x}_n,\bm{\pi}_{\hat{\bm{f}}}(\bm{x}))
		)\!+\!\bm{\epsilon}_n)$, where $\bm{\epsilon}_n\!\sim\!\mathcal{N}(0,\sigma_{\mathrm{on}}^2\bm{I}_{d_x})$ are i.i.d. Gaussian random variables with variance $\sigma_{\mathrm{on}}^2\!\in\!\mathbb{R}_+$, are sampled at 
		time instances $t^{(n)}\!=\!n\tau$ with sampling time $\tau\!\in\!\mathbb{R}_+$. The data is aggregated in a time-varying training set $\mathbb{D}_t\!=\!\{ (\bm{x}_n,\bm{y}_n) \}_{n=1}^{N(t)=\lfloor \frac{t}{\tau} \rfloor}$.
	\end{assumption}
	\cref{ass:data} admits training targets $\bm{y}$ perturbed by Gaussian noise, which is a frequently found assumption in literature, see, e.g.,~\cite{0-6_umlauft2019feedback, Capone2019, Gahlawat2020a}. It also requires noise-free state measurements for training, which however, is commonly assumed in many employed control schemes such as feedback linearization and sliding mode control~\cite{0-7_khalil2002nonlinear}.
	
	Since \cref{ass:data} ensures a continuous data stream, model updates of $\hat{\bm{f}}(\cdot)$ must be computed fast enough to avoid that data is generated at higher rates than it can be processed. Hence, the average update time $T_{\mathrm{up}}$ of $\hat{\bm{f}}(\cdot)$ must satisfy the computational constraint\looseness=-1
	\begin{align}\label{eq:compconst}
	    T_{\mathrm{up}}\leq \tau.
	\end{align}
	Additionally, the continuous stream of data 
	leads to a steadily growing size of the data set $\mathbb{D}_t$. 
	Therefore, the amount of generated data will eventually reach the memory limitations, which are unavoidable on all real-world systems. Formally, this can be modelled via the memory constraint
	\begin{align}\label{eq:memcon}
	|\mathbb{D}^{\mathrm{loc}}_t|\leq \bar{M},
	\end{align}
	where $\mathbb{D}^{\mathrm{loc}}_t$ denotes the data set stored in the memory of the technical system and $\bar{M}\!\in\!\mathbb{N}$ represents the memory limitations. Since this restriction can crucially limit the achievable control performance \cite{Lederer2020a}, we consider that data can be transferred to a cloud via a network connection, effectively extending the overall memory capacity. The available memory in the cloud is usually significantly larger than on the local system, such that we assume it to be infinite for simplicity. However, the data transfer between the cloud and the local system takes non-negligible time in practice due to effects such as network delays $T_d\!\in\!\mathbb{R}_+$ and finite bandwidth $B\!\in\!\mathbb{R}_+$. Therefore, data sent to the cloud cannot be immediately accessed by the local system, but the time $T_{\mathrm{access}}$ between requesting data $\mathbb{D}$ and using it has to satisfy the network constraint
	\begin{align}\label{eq:netcon}
	T_{\mathrm{access}} \geq \frac{|\mathbb{D}|}{B}+T_d.
	\end{align}
	Despite these restrictions, the model-based control law~$\bm{\pi}_{\hat{\bm{f}}}(\cdot)$ using the model~$\hat{\bm{f}}(\cdot)$ learned from the streaming data $\mathbb{D}_t$ should achieve a high tracking control performance. Therefore, we consider the problem of developing a networked online learning method for inferring a highly accurate model~$\hat{\bm{f}}(\cdot)$ of the unknown dynamics $\bm{f}(\cdot)$ under computational, memory and network constraints.

	\section{NETWORKED ONLINE LEARNING BASED ON GAUSSIAN PROCESSES}\label{sec:main}
	
	
	Since the time delay $T_d$ prevents externalizing the online learning%
	, we propose the networked online learning approach outlined in \cref{fig:overview}, which performs inference locally, but transfers unnecessary data to the cloud. The approach is based on GP regression~\cite{Rasmussen2006} due to its strong theoretical foundation as introduced in \cref{subsec:GP}. For enabling online learning with GPs, we employ LoG-GPs firstly proposed in our earlier work~\cite{0-1_lederer2021gaussian}, which inherit the probabilistic prediction error guarantees of exact GPs while having merely logarithmically increasing update and prediction complexities as outlined in \cref{subsec:LoG-GP}.
	In order to transmit data to the cloud without performance loss, we exploit the modular structure of LoG-GPs and determine the region, in which system states can potentially be in a given time interval, using a sampling-based approach in \cref{subsec:4.2}.  By developing a data transmission scheme 
	in \cref{subsec:4.3}, we ensure that necessary data is always locally available despite transmission bandwidth limitations and network delays. For notational simplicity, the proposed method is presented for scalar functions $f(\cdot)$, but can be employed for the vector-valued dynamics in \eqref{equ:system} by applying it to each dimension individually.\looseness=-1
	
	\subsection{Gaussian Process Regression}\label{subsec:GP}
	A Gaussian process is an infinite collection of random variables, any finite subset of which follows a joint Gaussian distribution \cite{Rasmussen2006}. The GP is usually denoted as $\mathcal{GP}(m(\cdot),k(\cdot,\cdot))$, where $m:\mathbb{R}^d\rightarrow\mathbb{R}$ is a prior mean incorporating a priori knowledge such as approximate models, and $k:\mathbb{R}^d\times\mathbb{R}^d\rightarrow\mathbb{R}_{0,+}$ is a covariance function reflecting information such as periodicity. Since we assume no prior knowledge, the prior mean~$m(\cdot)$ is set to $0$ in the sequel. Analogously, we employ the probably most common choice for the covariance function: the squared exponential kernel $k(\bm{x},\bm{x}')=\sigma_{f}^{2} \exp (-\sum_{i=1}^d (x_{i}-x_{i}^{\prime})^{2}/(2 l_i^{2}))$,
	where \mbox{$\sigma_{f}\in\mathbb{R}_+$} denotes the signal standard deviation, and $l_i\in\mathbb{R}_+$, \mbox{$i=1,\ldots,d$} are length scales \cite{Rasmussen2006}.
	
	Given a prior GP $\mathcal{GP}(0,k(\cdot,\cdot))$, regression is performed by conditioning on the training data $\mathbb{D}_t$ as introduced in \cref{ass:data}%
	. The resulting posterior distribution is again Gaussian with mean and variance given by\looseness=-1
	\begin{align}\label{eq:GP mean}
	\mu\left(\boldsymbol{x}\right)&=\bm{k}^T\left(\boldsymbol{x}\right)\left(\bm{K}+\sigma_{\mathrm{on}}^{2} \boldsymbol{I}\right)^{-1}\boldsymbol{y}\\
	\sigma^{2}\left(\boldsymbol{x}\right)&=k\left(\boldsymbol{x}, \boldsymbol{x}\right)-\bm{k}^T\left(\boldsymbol{x}\right)\left(\bm{K}+\sigma_{\mathrm{on}}^{2} \boldsymbol{I}\right)^{-1}\bm{k}\left(\boldsymbol{x}\right),\label{eq:GP variance}
	\end{align}
	where the elements of $\bm{K}\!\in\!\mathbb{R}^{N\!\times\! N}$ and $\bm{k}(\bm{x})\!\in\!\mathbb{R}^N$ are defined through $K_{i,j}\!=\!k(\bm{x}_i\!,\bm{x}_j)$ and $k_i(\bm{x})\!=\!k(\bm{x},\bm{x}_i)$, respectively, and we concatenate training targets $\bm{y}\!=\![y_1\ \cdots\ y_N]^T\!$.\looseness=-1

	\subsection{Locally Growing Random Tree of Gaussian Processes} \label{subsec:LoG-GP}
	
	Since the update complexity of Gaussian process regression scales quadratically with the number of training samples, we employ the recently proposed approach of locally growing random trees of GPs \cite{0-1_lederer2021gaussian}, which preserves beneficial properties of exact GP inference such as the existence of uniform prediction error bounds. 
	LoG-GPs rely on the idea of iteratively constructing a tree, whose leaf nodes contain locally active GP models. In detail, the construction starts with a single GP model, which is updated with incoming streaming data until a prescribed threshold of training samples $\bar{N}$ is reached. When the GP model contains $\bar{N}$ training samples in its data set $\mathbb{D}_{0}$, the data set is split into $2$ subsets $\mathbb{D}_i$, $i=1,2$, by assigning data in $\mathbb{D}_0$ to a subset $\mathbb{D}_i$ via sampling from a Lipschitz continuous probability function $p^0:\mathbb{R}^d\rightarrow [0,1]$. Thereby, a tree with $2$ leaf nodes is generated, which contain all the data, such that individual GP models can be efficiently computed using \eqref{eq:GP mean} and \eqref{eq:GP variance}. New streaming data obtained after the splitting can be assigned to the subsets $\mathbb{D}_i$ by sampling from $p^0(\cdot)$ again until either of the subsets $\mathbb{D}_i$ reaches the capacity limit $\bar{N}$. Then, a new probability function $p^i:\mathbb{R}^d\rightarrow [0,1]$ is defined to distribute the data to new subsets, thereby extending the tree of GPs by a new layer. By repeating this procedure every time a subset $\mathbb{D}_i$ reaches $\bar{N}$ training samples, a tree of GP models is iteratively constructed with a computational complexity of $\mathcal{O}_p(\log(N))$ allowing updates with rates up to $1\si{kHz}$ \cite{0-1_lederer2021gaussian}, which is fast enough to satisfy the computational constraint \eqref{eq:compconst} in many systems.\looseness=-1
	
	For computing predictions with LoG-GPs, we simply multiply the probabilities $p^i(\bm{x})$ along a path to a leaf node~$l$ to obtain the weight $\omega_l(\bm{x})$. Then, a generalized product of experts aggregation scheme \cite{9-4_deisenroth2015distributed} can be employed to obtain the approximate GP prediction
	\begin{align}\label{eq:log-gp mean}
	\!\tilde{\mu}(\bm{x})\!=\!\sum\limits_{l\in\mathbb{L}}\!\frac{\omega_l(\bm{x})\tilde{\sigma}^2(\bm{x})}{\sigma_l^2(\bm{x})}\mu_l(\bm{x}),&&
	\tilde{\sigma}^{-2}(\bm{x})\!=\!\sum\limits_{l\in\mathbb{L}}\frac{\omega_l(\bm{x})}{\sigma_l^2(\bm{x})},\!
	\end{align}
	where $\mathbb{L}$ denotes the set of leaf nodes of the tree of GP models. By defining the probability functions such that only a single child node has a positive probability in most of the input domain $\mathbb{R}^d$, most of the weights $\omega_l(\bm{x})$ become~$0$. Since the definition of the aggregated mean $\tilde{\mu}(\cdot)$ implies that the local GP predictions $\mu_l(\bm{x})$ and $\sigma_l^2(\bm{x})$ must only be computed if $\omega_l(\bm{x})>0$, models with $\omega_l(\bm{x})=0$ can be considered locally inactive at $\bm{x}$ and therefore, aggregated predictions can be efficiently computed in $\mathcal{O}_p(\log^2(N))$ complexity. 
	Moreover, this construction of the aggregated prediction~$\tilde{\mu}(\cdot)$ ensures that uniform error bounds are directly inherited from exact GP regression. 
	\begin{lemma} [\cite{0-1_lederer2021gaussian}] \label{the:Regression Error Bound}
		Assume the function $f:\mathbb{R}^d\rightarrow\mathbb{R}$ is a sample from a Gaussian process $\mathcal{GP}(0,k(\cdot,\cdot))$ with a $L_k$-Lipschitz kernel $k:\mathbb{R}^d\times\mathbb{R}^d\rightarrow\mathbb{R}$. Then, the aggregated mean function \eqref{eq:log-gp mean} of a LoG-GP trained with data satisfying \cref{ass:data} guarantees a probabilistically, uniformly bounded prediction error on a compact domain $\Omega\subset\mathbb{R}^d$, i.e., for $\delta\in(0,1)$ and $\rho\in\mathbb{R}_+$, we have
		\begin{equation} \label{equ:AA}
		P(|f(\boldsymbol{x})-\tilde{\mu}(\boldsymbol{x})| \leq \eta( \boldsymbol{x}), \forall \boldsymbol{x} \in \Omega) \geq 1-\delta,\end{equation} 
		where\allowdisplaybreaks
		\begin{align} \label{eq:BB}
		\eta( \boldsymbol{x})&=\!\sqrt{\beta(\delta,\rho)} \sum_{l \in \mathbb{L}} \frac{\omega_{l}(\bm{x})\tilde{\sigma}^2(\bm{x})}{\sigma_l(\bm{x})} +\gamma(\rho)\\
		\!\beta(\delta,\rho)&=\!2 \log\! \Big(\!d^{\frac{d}{2}} |\mathbb{L}|\! \max _{\boldsymbol{x}, \boldsymbol{x}^{\prime} \!\in \mathbb{R}^d}\!\! \left\|\boldsymbol{x}\!-\!\boldsymbol{x}^{\prime}\right\|_{\infty}^{d}\!\!\Big)\!-\!2\log\!\left(\delta 2^d\rho^d\right)\!\!\\
		\gamma(\rho)&=\!\sum_{l \in \mathbb{L}} \frac{\omega_{l}\tilde{\sigma}^2(\bm{x})}{\sigma_l^2(\bm{x})}\!\Big(L_{\mu_{l}} \rho+\sqrt{\beta(\rho)} L_{\sigma_{l}} \tau\Big)+L_{f} \rho,\!
		\end{align} 
		and $L_f$, $L_{\mu_l}$, $L_{\sigma_l}$ are Lipschitz constants of $f(\cdot)$, $\mu_l(\cdot)$, $\sigma_l(\cdot)$.\looseness=-1
	\end{lemma}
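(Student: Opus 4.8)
The plan is to follow the classical recipe for uniform GP error bounds---pointwise concentration, a union bound over a finite cover, and a Lipschitz-continuity argument to fill the gaps---adapted to the product-of-experts aggregation \eqref{eq:log-gp mean}. The structural fact that makes this work is that the aggregated mean is a convex combination of the local means: since $\tilde{\sigma}^{-2}(\bm{x})=\sum_{l\in\mathbb{L}}\omega_l(\bm{x})/\sigma_l^2(\bm{x})$ and $\sum_{l\in\mathbb{L}}\omega_l(\bm{x})=1$ by construction of the tree, the weights $w_l(\bm{x}):=\omega_l(\bm{x})\tilde{\sigma}^2(\bm{x})/\sigma_l^2(\bm{x})$ are nonnegative and sum to one, so $f(\bm{x})-\tilde{\mu}(\bm{x})=\sum_{l\in\mathbb{L}}w_l(\bm{x})\big(f(\bm{x})-\mu_l(\bm{x})\big)$. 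Hence it suffices to control each local residual $f(\bm{x})-\mu_l(\bm{x})$ and then recombine with the $w_l(\bm{x})$.

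First I would establish the pointwise bound. Conditioned on the data in leaf $l$, the posterior of $f(\bm{x})$ is Gaussian with mean $\mu_l(\bm{x})$ and variance $\sigma_l^2(\bm{x})$ from \eqref{eq:GP mean}--\eqref{eq:GP variance}, so a standard Gaussian tail estimate gives $P(|f(\bm{x})-\mu_l(\bm{x})|\le\sqrt{\beta}\,\sigma_l(\bm{x}))\ge 1-e^{-\beta/2}$ for any $\beta>0$ and fixed $\bm{x}$. Next I would pick a finite grid $\Omega_\rho\subset\mathbb{R}^d$ whose points are $\rho$-dense in $\Omega$ in the Euclidean norm; realizing this with an $\ell_\infty$-spaced lattice costs roughly $d^{d/2}\max_{\bm{x},\bm{x}'\in\Omega}\|\bm{x}-\bm{x}'\|_\infty^d/(2^d\rho^d)$ points, which is precisely the cardinality entering $\beta(\delta,\rho)$. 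A union bound over all grid points and all $|\mathbb{L}|$ leaves, with the stated choice of $\beta(\delta,\rho)$, makes the total failure probability at most $\delta$; on the complementary event, $|f(\bm{x}')-\mu_l(\bm{x}')|\le\sqrt{\beta(\delta,\rho)}\,\sigma_l(\bm{x}')$ holds simultaneously at every grid point $\bm{x}'$ and every leaf $l$.

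It then remains to transfer this to arbitrary $\bm{x}\in\Omega$. For such an $\bm{x}$, let $\bm{x}'\in\Omega_\rho$ be a nearest grid point, so $\|\bm{x}-\bm{x}'\|\le\rho$. Using the convex-combination identity and inserting $\bm{x}'$ via the triangle inequality, each local residual is bounded by $|f(\bm{x})-f(\bm{x}')|+|\mu_l(\bm{x})-\mu_l(\bm{x}')|+|f(\bm{x}')-\mu_l(\bm{x}')|$, and the concentration bound at $\bm{x}'$ can in turn be rewritten through $\sigma_l(\bm{x})$ up to $|\sigma_l(\bm{x})-\sigma_l(\bm{x}')|$. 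Invoking Lipschitz continuity of $f$, $\mu_l$ and $\sigma_l$---whose constants $L_f$, $L_{\mu_l}$, $L_{\sigma_l}$ are finite because $k$ is $L_k$-Lipschitz, cf.~\cite{0-5_lederer2019uniform}---bounds these gaps by the corresponding Lipschitz constant times $\rho$ (and times $\sqrt{\beta}$ for the variance term). Re-weighting by $w_l(\bm{x})$ and summing over $l$ reproduces exactly the concentration term $\sqrt{\beta(\delta,\rho)}\sum_{l\in\mathbb{L}}\omega_l(\bm{x})\tilde{\sigma}^2(\bm{x})/\sigma_l(\bm{x})$ together with the collected Lipschitz remainder $\gamma(\rho)$, completing the proof.

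I would expect two points to require the most care. The first is the weight bookkeeping: one must verify that $\sum_{l\in\mathbb{L}}\omega_l(\bm{x})=1$ after every tree update (the probabilities along root-to-leaf paths form a partition of unity) and that locally inactive leaves with $\omega_l(\bm{x})=0$ contribute nothing, so the sums over $\mathbb{L}$ are effectively over finitely many active models with well-defined, strictly positive $\sigma_l^2(\bm{x})$. The second is matching the cover cardinality to the stated $\beta(\delta,\rho)$ exactly---tracking the $\ell_\infty$-to-$\ell_2$ conversion factors $d^{d/2}$ and $2^d$ and the dependence on $\mathrm{diam}_\infty(\Omega)$---and confirming that conditioning on the (random) data-to-leaf assignment is harmless, since the argument holds for any realized tree and therefore unconditionally.
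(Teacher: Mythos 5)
This lemma is not proved in the paper at all---it is imported verbatim from \cite{0-1_lederer2021gaussian}---but your blind reconstruction is essentially the argument given in that reference: the generalized product-of-experts weights $\omega_l(\bm{x})\tilde{\sigma}^2(\bm{x})/\sigma_l^2(\bm{x})$ are nonnegative and sum to one, each leaf is an exact GP posterior on its data subset (the input-only splitting probabilities make conditioning on the realized tree harmless), and the per-leaf uniform bound follows from pointwise Gaussian concentration on a $\rho$-cover, a union bound over grid points and the $|\mathbb{L}|$ leaves, and Lipschitz gap-filling, exactly as in \cite{0-5_lederer2019uniform}. Your reconstruction is correct and even fixes a notational slip in the stated lemma: the factor $\tau$ in $\gamma(\rho)$ should be the cover radius $\rho$ (and $\beta(\rho)$ should read $\beta(\delta,\rho)$), a leftover from the cited works where the grid constant is denoted $\tau$.
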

	
	This result relies on a well-calibrated prior GP
	, which is a rather unrestrictive assumption in practice \cite{0-5_lederer2019uniform}. 
	Therefore, LoG-GPs provide strong theoretical guarantees for their prediction accuracy as required in safety critical applications. 
	
	
	
	\setlength{\textfloatsep}{7pt}
	\setlength{\floatsep}{0pt}
	

		\subsection{Sampling-Based Identification of Active Models} \label{subsec:4.2}
		
		Since \cref{the:Regression Error Bound} ensures bounded prediction errors when using \eqref{eq:log-gp mean} to learn a model $\hat{\bm{f}}(\cdot)$ of the unknown dynamics~$\bm{f}(\cdot)$, we can determine the system states $\bm{x}$ which can be potentially reached within a fixed time interval using the tracking error bound $\vartheta(\kappa_t)$ introduced in \cref{ass:errbound}. Therefore, we can obtain the models, which need to be available in the local memory, by finding all individual GP models which are active for states $\bm{x}$ in the potentially reachable set. \looseness=-1
		
		In detail, this set $\mathbb{A}$ of potentially active models during a time window $\mathbb{W}\!=\![t_1,t_2]$, $t_1,t_2\!\in\!\mathbb{R}$, $t_2\!>\!t_1$, is defined through the intersections between active regions $\mathbb{X}_l\!=\!\{\bm{x}\!: \omega_l(\bm{x})\!>\!0\}$ of local models $l\!\in\!\mathbb{L}$ and the tube $\mathbb{T}_{t_1}^{t_2}\!=\!\{\bm{x}\!\in\!\mathbb{R}^d\!: \exists t\!\in\!\mathbb{W}, \bm{x}\!\in\!\mathbb{B}_{\vartheta(\kappa_t)}  \}$ based on balls $\mathbb{B}_{\vartheta(\kappa_t)}\!=\!\{\bm{x}\!\in\!\mathbb{R}^d\!: \|\bm{x}\!-\!\bm{x}_{\mathrm{ref}}(t)\|\!\leq\! \vartheta(\kappa_t)\}$ with radius given by \cref{ass:errbound}, i.e., $\mathbb{A}\!=\!\{l\!\in\!\mathbb{L}\!:  \mathbb{X}_l\cap\mathbb{T}_{t_1}^{t_2}\!\neq\! \emptyset \}$
		\begin{figure}[t] 
			\definecolor{FigureRed}{RGB}{255,0,0}
			\definecolor{FigureOrange}{RGB}{245,123,0}
			\definecolor{FigureGreen}{RGB}{56,142,61}
			
			\centering
			\begin{tikzpicture}[font=\sffamily]
			\node at (0,0) {\includegraphics[width=0.445\textwidth]{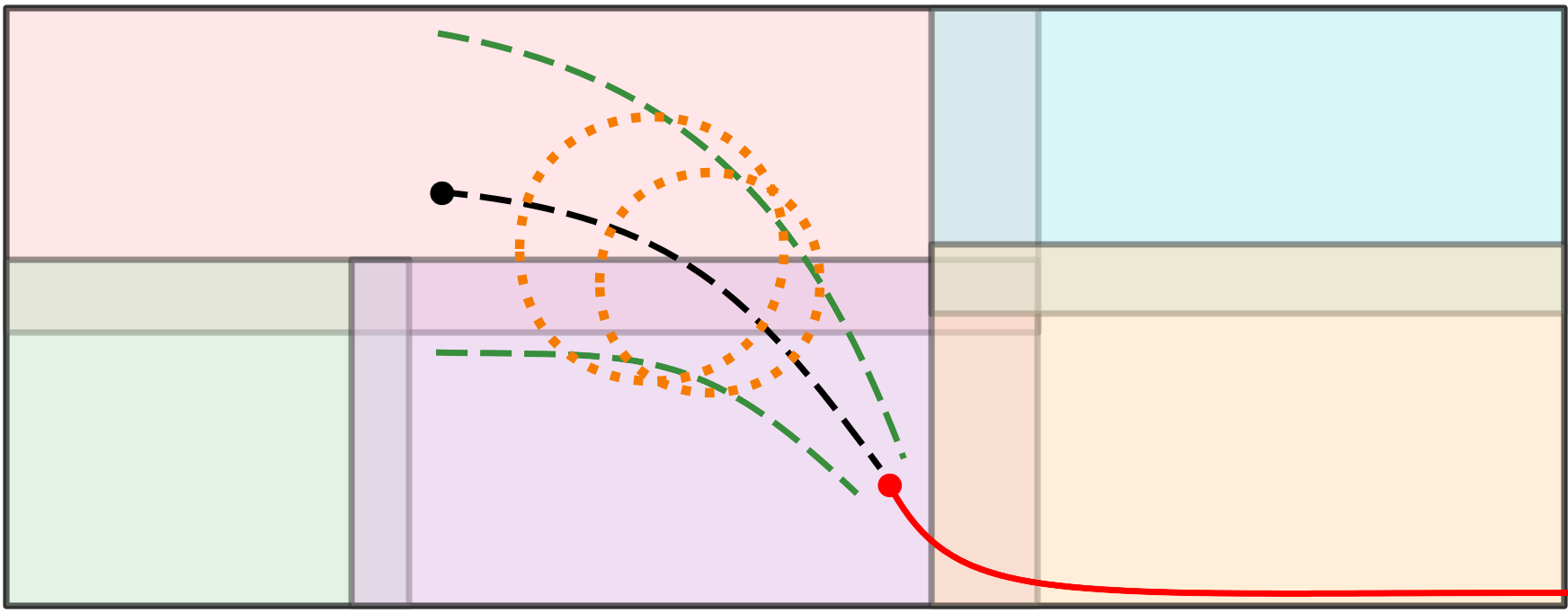}};
			\node[FigureGreen] at (-1.55,-0.55) {\footnotesize $\mathbb{T}_{t_1}^{t_2}$};
			\node at (-2.3,0.6) {\footnotesize $\bm{x}_{\mathrm{ref}}(t_2)$};
			\node[FigureRed] at (1.3,-0.9) {\footnotesize $\bm{x}_{\mathrm{ref}}(t_1)$};
			\node[FigureOrange] at (0.45,1.3) {\footnotesize $\mathbb{B}_{\xi}$};
			\draw[-Latex, FigureOrange] (0.2,1.2) -- (-0.15,0.9);
			\draw[-Latex, FigureOrange] (0.3,1.1) -- (0.2,0.55);
			
			\draw[thick] (0.75,0.33) rectangle (3.95,1.51);
			\node at (2.6,0.92) {\footnotesize $\mathbb{X}_l\cap\mathbb{T}_{t_1}^{t_2}= \emptyset$};
			\end{tikzpicture}
			\vspace{-0.4cm}
			\caption{A local model $l\in\mathbb{L}$ is inactive if its active region $\mathbb{X}_l$ does not intersect with the tube $\mathbb{T}_{t_1}^{t_2}$ induced by the tracking error bound $\vartheta(\kappa_t)$ as illustrated for the region in the top right. The set of active models $\hat{\mathbb{A}}$ is found by over-approximating the tube $\mathbb{T}_{t_1}^{t_2}$ with balls $\mathbb{B}_{\xi}$, from which random samples $\bm{x}^{(i)}$ are drawn to determine the active models $\mathbb{A}_{\bm{x}^{(i)}}$ at these states.\looseness=-1
			}
			\label{fig:explain}
		\end{figure}
		as illustrated in \cref{fig:explain}. Since the computation of the intersections $\mathbb{X}_l\cap \mathbb{T}_{t_1}^{t_2}$ requires an explicit representation of the active regions $\mathbb{X}_l$ of local models $l\in\mathbb{L}$, which is not provided by LoG-GPs, the definition of $\mathbb{A}$ cannot be directly used in practice. We follow a different idea exploiting the implicit representation of the active regions $\mathbb{X}_l$ via the weights $\omega_l(\cdot)$, which allows to directly compute the set of active models $\mathbb{A}_{\bm{x}}=\{l\!\in\!\mathbb{L}\!: \omega_l(\bm{x})\!>\!0 \}$ for a given state $\bm{x}$. Therefore, we can alternatively represent the set of potentially active models during the time window $\mathbb{W}$ via $\mathbb{A}=\bigcup\nolimits_{t\in\mathbb{W}}\bigcup\nolimits_{\bm{x}\in \mathbb{B}_{\vartheta(\kappa_t)}}\mathbb{A}_{\bm{x}}$.
		By approximating the unions over uncountable sets via discretization and random sampling as outlined in Algorithm~\ref{alg:decrease}, we can over-approximate the set $\mathbb{A}$ via $\hat{\mathbb{A}}$ and obtain\looseness=-1
		\begin{align}\label{eq:muhat}
		\!\hat{\mu}(\bm{x})\!=\!\sum\limits_{l\in\hat{\mathbb{A}}}\!\frac{\omega_l(\bm{x})\hat{\sigma}^2(\bm{x})}{\sigma_l^2(\bm{x})}\mu_l(\bm{x}),
		&&\hat{\sigma}^{-2}(\bm{x})\!=\!\sum\limits_{l\in\hat{\mathbb{A}}}\frac{\omega_l(\bm{x})}{\sigma_l^2(\bm{x})}.\!
		\end{align}
		If sufficiently many samples are used, this approximation yields identical predictions as shown in the following result.\looseness=-1
		
		\begin{algorithm}[t]
			\small
			\SetInd{0.6em}{0.5em}
			\DontPrintSemicolon
			\SetKwFunction{FSOP}{ActiveModels}
			\SetKwProg{Fn}{Function}{:}{}
			\Fn{\FSOP{$N_s$, $t_1$, $t_2$, $\Delta t$}}{
				$\hat{\mathbb{A}}\gets\emptyset$\;
				compute $\xi$ using \eqref{eq:xi}\;
				\For{$j=0:\lceil\frac{t_2-t_1}{\Delta t} \rceil$}{
					\For{$i=1:N_s$}{
						Determine active models $\mathbb{A}_{\bm{x}^{(i)}}$ for input $\bm{x}^{(i)}\!\!\sim\!\mathcal{U}(\mathcal{B}_{\xi})$\;
						$\hat{\mathbb{A}}\gets\hat{\mathbb{A}}\cup\mathbb{A}_{\bm{x}^{(i)}}$\;
					}
				}
			}
			\KwRet $\hat{\mathbb{A}}$
			\caption{Determining Active Models}
			\label{alg:decrease}
		\end{algorithm}
		
		\begin{theorem}\label{th:sample_active}
			Consider a dynamical system~\eqref{equ:system} and assume Assumptions~\ref{ass:errbound} and~\ref{ass:data} hold. Choose $\|\bm{e}(t_1)\|\leq \vartheta(\kappa_{t_1})$ and\looseness=-1
			\begin{align}\label{eq:xi}
			\xi = 2\zeta + L_{\bm{x}_{\mathrm{ref}}} \frac{\Delta t}{2}+\vartheta(\kappa_{t_1+j\Delta t})
			\end{align}
			for constants $\zeta,\Delta t\in\mathbb{R}_+$. Then, with probability of at least 
			\begin{align}\label{eq:activeModelProb}
			\!1\!-\! \|\mathbb{L}\|\!\left\lceil\!\frac{t_2\!-\!t_1}{\Delta t} \!\right\rceil\!\! \left(\! 1\!-\!\frac{\min\{r_{\min}^{d_x},\zeta^{d_x}\}}{\xi^{d_x}}\right)^{\!\!N_s}\!,\!
			\end{align}
			where $r_{\min}$ denotes the radius of the largest ball contained in the smallest active region of a leaf node $l\in\mathbb{L}$,
			the predictions $\hat{\mu}(\bm{x}(t))$ and $\tilde{\mu}(\bm{x}(t))$ are identical for all $t\in\mathbb{W}$.
		\end{theorem}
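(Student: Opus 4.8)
\emph{Proof plan.} The plan is to show that, with the stated probability, the set $\hat{\mathbb{A}}$ returned by \cref{alg:decrease} contains every local model that is active anywhere along the realized trajectory, and then to observe that this is all that is needed for the two aggregated predictions to coincide. Concretely, comparing \eqref{eq:log-gp mean} and \eqref{eq:muhat}, the expressions for $\tilde{\mu}$ and $\hat{\mu}$ (and likewise for $\tilde{\sigma}^2$ and $\hat{\sigma}^2$) differ only by the summands with index $l\in\mathbb{L}\setminus\hat{\mathbb{A}}$, each of which carries the factor $\omega_l(\bm{x})$ and thus vanishes whenever $\omega_l(\bm{x})=0$. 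Hence $\hat{\mu}(\bm{x}(t))=\tilde{\mu}(\bm{x}(t))$ as soon as $\hat{\mathbb{A}}\supseteq\mathbb{A}_{\bm{x}(t)}:=\{l\in\mathbb{L}:\omega_l(\bm{x}(t))>0\}$; and since $\bm{x}(t)\in\mathbb{T}_{t_1}^{t_2}$ (established next) gives $\mathbb{A}_{\bm{x}(t)}\subseteq\mathbb{A}$, it suffices to prove $\hat{\mathbb{A}}\supseteq\mathbb{A}$ with probability at least \eqref{eq:activeModelProb}.

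Next I would confine the trajectory to the tube and relate the tube to the sampling balls of \cref{alg:decrease}. With $\|\bm{e}(t_1)\|\leq\vartheta(\kappa_{t_1})$, and since both $\vartheta$ and $t\mapsto\kappa_t$ are monotonically increasing, \cref{ass:errbound} keeps the tracking error inside its (forward-invariant) ultimate bound, i.e.\ $\|\bm{x}(t)-\bm{x}_{\mathrm{ref}}(t)\|\leq\vartheta(\kappa_t)$ for all $t\in\mathbb{W}$, so $\bm{x}(t)\in\mathbb{B}_{\vartheta(\kappa_t)}\subseteq\mathbb{T}_{t_1}^{t_2}$. Moreover, any $\bm{z}\in\mathbb{T}_{t_1}^{t_2}$ satisfies $\|\bm{z}-\bm{x}_{\mathrm{ref}}(t)\|\leq\vartheta(\kappa_t)$ for some $t\in\mathbb{W}$, which lies within $\Delta t/2$ of some grid time $t_1+j\Delta t$ with $j\in\{0,\dots,\lceil(t_2-t_1)/\Delta t\rceil\}$; the triangle inequality with the $L_{\bm{x}_{\mathrm{ref}}}$-Lipschitz continuity of $\bm{x}_{\mathrm{ref}}$ and monotonicity then yields $\|\bm{z}-\bm{x}_{\mathrm{ref}}(t_1+j\Delta t)\|\leq\vartheta(\kappa_{t_1+j\Delta t})+L_{\bm{x}_{\mathrm{ref}}}\Delta t/2=\xi-2\zeta$ by \eqref{eq:xi}. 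Thus the tube is covered by the balls $\mathcal{B}_{\xi-2\zeta}(\bm{x}_{\mathrm{ref}}(t_1+j\Delta t))$ over the grid, which are exactly the centres of the sampling balls $\mathcal{B}_{\xi}$ used in \cref{alg:decrease}, now with a $2\zeta$ margin.

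The crux — and the step I expect to be the main obstacle — is to lower-bound the chance that a single uniform draw lands inside a prescribed active region. Fix $l\in\mathbb{A}$; by the covering above there is a grid index $j$ for which $\mathbb{X}_l$ meets $\mathcal{B}_{\xi-2\zeta}(\bm{c}_j)$ with $\bm{c}_j:=\bm{x}_{\mathrm{ref}}(t_1+j\Delta t)$. Invoking the LoG-GP construction — the splitting probabilities $p^i$ are Lipschitz and chosen so that each active region $\mathbb{X}_l=\{\omega_l>0\}$ contains an inscribed ball of radius at least $r_{\min}$ — together with the $2\zeta$ margin just obtained, one shows that $\mathbb{X}_l\cap\mathcal{B}_{\xi}(\bm{c}_j)$ contains a ball of radius $\min\{r_{\min},\zeta\}$. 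Consequently a draw $\bm{x}^{(i)}\sim\mathcal{U}(\mathcal{B}_{\xi}(\bm{c}_j))$ lies in $\mathbb{X}_l$ with probability at least the ratio of ball volumes $\min\{r_{\min}^{d_x},\zeta^{d_x}\}/\xi^{d_x}$, so the event that all $N_s$ inner-loop samples at index $j$ miss $\mathbb{X}_l$ has probability at most $(1-\min\{r_{\min}^{d_x},\zeta^{d_x}\}/\xi^{d_x})^{N_s}$, and on its complement $l\in\mathbb{A}_{\bm{x}^{(i)}}\subseteq\hat{\mathbb{A}}$. The delicate point is purely geometric: making rigorous why the guaranteed inscribed sub-ball can always be placed inside the inflated sampling ball $\mathcal{B}_{\xi}(\bm{c}_j)$, which is precisely where the shape of LoG-GP active regions and the buffer $\zeta$ in \eqref{eq:xi} are genuinely used.

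Finally, a union bound closes the argument: $\mathbb{A}\not\subseteq\hat{\mathbb{A}}$ requires that, for some leaf $l$ (of the $\|\mathbb{L}\|$ leaves) and some grid index $j$ (of the $\lceil(t_2-t_1)/\Delta t\rceil$ counted in \eqref{eq:activeModelProb}), all $N_s$ samples at index $j$ miss $\mathbb{X}_l$, an event of probability at most $\|\mathbb{L}\|\lceil(t_2-t_1)/\Delta t\rceil(1-\min\{r_{\min}^{d_x},\zeta^{d_x}\}/\xi^{d_x})^{N_s}$. Hence $\hat{\mathbb{A}}\supseteq\mathbb{A}$, and therefore $\hat{\mu}(\bm{x}(t))=\tilde{\mu}(\bm{x}(t))$ for all $t\in\mathbb{W}$, with at least the probability in \eqref{eq:activeModelProb}. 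Besides the geometric step, the only other places that need a little care are the forward invariance of the ultimate bound used above and the off-by-one between the $\lceil(t_2-t_1)/\Delta t\rceil+1$ grid indices actually processed by \cref{alg:decrease} and the factor $\lceil(t_2-t_1)/\Delta t\rceil$ appearing in the bound.
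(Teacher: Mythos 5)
Your proposal is correct and follows essentially the same route as the paper's proof: covering the tube with the inflated balls $\mathcal{B}_{\xi}$ around gridded reference points, lower-bounding the per-sample hit probability by the volume ratio $\min\{r_{\min}^{d_x},\zeta^{d_x}\}/\xi^{d_x}$, and closing with a union bound over leaves and grid indices. The geometric step you flag as delicate is asserted in the paper at the same level of detail (as a volume bound on $\mathbb{X}_l\cap\mathcal{B}_{\xi}$), so your treatment matches the published argument.
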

		\begin{proof}
			See \cref{app:A}.
		\end{proof}
		Since this theorem ensures that \eqref{eq:log-gp mean} and \eqref{eq:muhat} are identical with probability greater than \eqref{eq:activeModelProb}, it ensures that using $\hat{\mu}(\cdot)$ as model in a control law $\bm{\pi}_{\hat{\bm{f}}}(\cdot)$ yields no reduction in control performance with high probability. Therefore, it allows us to determine irrelevant data for a time interval $\mathbb{W}$, which we exploit in the following section for transmitting data to the cloud, thereby reducing the local memory occupation.

	\subsection{Transmission Scheme}\label{subsec:4.3}
	
	Due to the non-negligible time required for a data transfer, the transmission to and from the cloud must be carefully scheduled in order to ensure that 
	the necessary data is always available locally. For simplicity, we consider that data is transmitted at regularly spaced time instances $j\Delta T$, $j=\mathbb{N}$, such that each time interval $\mathbb{W}_j=[(j-1)\Delta T, j\Delta T]$ has a length of $\Delta T\in\mathbb{R}_+$. During each time interval $\mathbb{W}_j$, we propose the transmission scheme illustrated in \cref{fig:trans_scheme}, where the idea is that the memory is divided into two parts. During each interval $\mathbb{W}_j$, half of the memory is used for updating the local data set with data from the cloud, while the other half contains the data set $\mathbb{D}_j$ necessary for computing the mean predictions~$\hat{\mu}(\cdot)$ during time interval $\mathbb{W}_j$ according to the potentially active models $\hat{\mathbb{A}}_{j}$. For updating the local memory, the data set $\mathbb{D}_{j-1}$ from the previous interval $\mathbb{W}_{j-1}$, which contains newly measured training samples as well as data from the cloud, is sent to the cloud. Once this transmission has been completed, the cloud contains the complete data set $\mathbb{D}_{(j-1)\Delta T}$ obtained until time $(j-1)\Delta T$, such that Algorithm~\ref{alg:decrease} can be employed to determine the possibly active models $\hat{\mathbb{A}}_{j+1}$ for the next time interval $\mathbb{W}_{j+1}$ in the cloud. The corresponding data set $\mathbb{D}_{j+1}$ is sent to the local memory, such that it is available for $t\geq (j+1)\Delta T$. \looseness=-1

		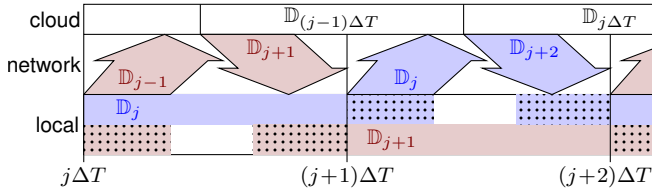
\begin{figure}
		\centering
		\begin{tikzpicture}[font=\sffamily]
		\clip (1.0,-0.55) rectangle (9.55,2.0);
		\draw (2,-0.1) rectangle (9.65,0.7);
		
		\draw (2,1.5) rectangle (9.65,1.9);
		
		\fill[red!20] (2,-0.1) rectangle (3.15,0.3);
		\fill[pattern=dots] (2,-0.1) rectangle (3.15,0.3);
		\fill[blue!20] (2,0.3) rectangle (6.65,0.7);
		\fill[pattern=dots] (5.5,0.3) rectangle (6.65,0.7);
		
		\fill[red!20] (4.25,-0.1) rectangle (9.65,0.3);
		\fill[pattern=dots] (4.25,-0.1) rectangle (5.5,0.3);
		\fill[pattern=dots] (9,-0.1) rectangle (9.65,0.3);
		
		\fill[blue!20] (7.75,0.3) rectangle (9.65,0.7);
		\fill[pattern=dots] (7.75,0.3) rectangle (9.0,0.7);
		
		\draw[fill=red!20] (2,0.7) -- (2.4,1.15) -- (2.15,1.15) -- (3.075,1.5) -- (3.85,1.15) -- (3.65,1.15) -- (3.15,0.7) -- (2,0.7);
		\draw[fill=red!20] (3.55,1.5) -- (4.05,1.05) -- (3.8,1.05) -- (4.725,0.7) -- (5.5,1.05) -- (5.3,1.05) -- (4.8,1.5) -- (3.55,1.5);
		
		\draw[fill=blue!20] (5.5,0.7) -- (5.9,1.15) -- (5.65,1.15) -- (6.575,1.5) -- (7.35,1.15) -- (7.15,1.15) -- (6.65,0.7) -- (5.5,0.7);
		\draw[fill=blue!20] (7.05,1.5) -- (7.55,1.05) -- (7.3,1.05) -- (8.225,0.7) -- (9,1.05) -- (8.8,1.05) -- (8.3,1.5) -- (7.05,1.5);
		
		\draw[fill=red!20] (9,0.7) -- (9.4,1.15) -- (9.15,1.15) -- (10.075,1.5) -- (10.85,1.15) -- (10.65,1.15) -- (10.15,0.7) -- (9,0.7);
		
		\draw[] (2,-0.2) -- (2,1.5);
		\node at (2,-0.4) {\footnotesize $j\Delta T$};
		\draw[] (5.5,-0.2) -- (5.5,1.5);
		\node at (5.5,-0.4) {\footnotesize $(j\!+\!1)\Delta T$};
		\draw[] (9,-0.2) -- (9,1.5);
		\node at (8.93,-0.4) {\footnotesize $(j\!+\!2)\Delta T$};
		
		\draw[] (3.55,1.5) -- (3.55,1.9);
		\draw[] (7.05,1.5) -- (7.05,1.9);
		
		\node at (1.65,0.3) {\footnotesize local};
		\node at (1.47,1.1) {\footnotesize network};
		\node at (1.62,1.7) {\footnotesize cloud};

		\node at (2.6,0.5) {\footnotesize \textcolor{blue}{$\mathbb{D}_{j}$}};
		\node at (6.1,0.1) {\footnotesize \textcolor{red}{$\mathbb{D}_{j+1}$}};

		\node at (2.8,0.9) {\footnotesize \textcolor{red}{$\mathbb{D}_{j-1}$}};
		\node at (4.5,1.3) {\footnotesize \textcolor{red}{$\mathbb{D}_{j+1}$}};
		
		\node at (6.3,0.9) {\footnotesize \textcolor{blue}{$\mathbb{D}_{j}$}};
		\node at (8,1.3) {\footnotesize \textcolor{blue}{$\mathbb{D}_{j+2}$}};
		
		\node at (5.3,1.7) {\footnotesize $\mathbb{D}_{(j-1)\Delta T}$};
		\node at (9,1.7) {\footnotesize $\mathbb{D}_{j\Delta T}$};

		\end{tikzpicture}
		\vspace{-0.6cm}
		\caption{During each interval $\mathbb{W}_j=[j\Delta T,(j+1)\Delta T]$, the previously necessary data $\mathbb{D}_{j-1}$ is sent to the cloud and the data $\mathbb{D}_{j+1}$ for the next interval $\mathbb{W}_{j+1}$ is fetched. While these data sets occupy memory during the interval $\mathbb{W}_j$, parts of $\mathbb{D}_{j-1}$ and $\mathbb{D}_{j+1}$ are in transmission and not available on the local system. Therefore, these data sets cannot be used for prediction, which is highlighted through the dotted pattern. The data in the cloud is updated with incoming transmissions, such that it contains the complete data set $\mathbb{D}_{(j-1)\Delta T}$ up to the end of previous interval $j-1$.\looseness=-1
		}
		\label{fig:trans_scheme}
	\end{figure}

	It is straightforward to see that this transmission scheme can ensure the satisfaction of the network constraint \eqref{eq:netcon} for  a fixed data set $\mathbb{D}_j$, if $T_{\mathrm{access}}= \Delta T/2$ is sufficiently large. However, due to the online generation of data during system operation, it generally cannot be ensured that the data sets $\mathbb{D}_j$ have a bounded size, such that the fixed time $\Delta T$ might eventually not be sufficient to finish the transmission within the time interval $\mathbb{W}_j$. Therefore, the real-time learning with data generated online during system operation has to be stopped eventually at some interval $\mathbb{W}_{\iota}$, $\iota\in\mathbb{N}$ in order to upper bound the size of all sets $\mathbb{D}_j$. This leads to the data transfer scheme outlined in Algorithm~\ref{alg:cloud} for the cloud and in Algorithm~\ref{alg:local} for the local system, for which it is straightforward to prove the satisfaction of the network constraint \eqref{eq:netcon} as shown in the following result.
	
		\begin{lemma}\label{lem:netcon}
		Choose $\Delta T\geq \frac{\bar{M}}{B}+2T_d$ and $\iota\in\mathbb{N}$ such that the memory constraint \eqref{eq:memcon} is satisfied. Then, Algorithms~\ref{alg:cloud} and~\ref{alg:local} ensure the satisfaction of the network constraint \eqref{eq:netcon}.
	\end{lemma}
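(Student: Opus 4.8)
The plan is to show that the schedule prescribed by Algorithms~\ref{alg:cloud} and~\ref{alg:local} never requires a data set on the local system before the physically unavoidable transmission time $|\mathbb{D}|/B+T_d$ has elapsed, which is exactly what the network constraint \eqref{eq:netcon} asks for. Two ingredients need to be combined: a uniform upper bound on the size of every transmitted data set, and a lower bound on the time the scheme reserves for each transmission.

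First I would argue that every exchanged data set satisfies $|\mathbb{D}_j|\leq\bar{M}/2$. By hypothesis, $\iota$ is chosen so that the memory constraint \eqref{eq:memcon} holds; since the local memory is partitioned into two equally sized halves --- one holding the prediction data set $\mathbb{D}_j$ of the current interval, the other used (sequentially) for uploading $\mathbb{D}_{j-1}$ and downloading $\mathbb{D}_{j+1}$ --- the inequality $|\mathbb{D}^{\mathrm{loc}}_t|\leq\bar{M}$ can hold throughout an interval only if each of these data sets occupies at most one half. Stopping the online learning at $\mathbb{W}_{\iota}$ is precisely what prevents the active-model data sets from growing without bound, so $|\mathbb{D}_j|\leq\bar{M}/2$ holds uniformly in $j$.

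Next I would read off from the transmission scheme (Algorithms~\ref{alg:cloud} and~\ref{alg:local}, cf.\ \cref{fig:trans_scheme}) that within every interval $\mathbb{W}_j$ of length $\Delta T$ the two transmissions on the updating half are carried out one after the other, so that uploading $\mathbb{D}_{j-1}$ and receiving $\mathbb{D}_{j+1}$ each get a window of $T_{\mathrm{access}}=\Delta T/2$ before the received data is first needed for prediction in $\mathbb{W}_{j+1}$. Combining this with the size bound and the hypothesis $\Delta T\geq\bar{M}/B+2T_d$ yields
\[
T_{\mathrm{access}}=\frac{\Delta T}{2}\geq\frac{\bar{M}}{2B}+T_d\geq\frac{|\mathbb{D}_j|}{B}+T_d,
\]
which is \eqref{eq:netcon}. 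A short induction over $j$, with the start-up intervals as base case, then confirms that $\mathbb{D}_{j+1}$ is fully present at time $(j+1)\Delta T$ exactly when it starts being used, so no prediction step is ever stalled by a pending transfer.

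The main obstacle I expect is not the inequality chain but the bookkeeping required to make the timing diagram of \cref{fig:trans_scheme} rigorous: one has to verify from the algorithms that the cloud-side evaluation of Algorithm~\ref{alg:decrease} together with the upload of $\mathbb{D}_{j-1}$ fits into the first half of the interval (or, equivalently, that the download of $\mathbb{D}_{j+1}$ need not wait for the upload to complete), so that the full $\Delta T/2$ indeed remains available for the download, and to dispose of the edge cases at the first intervals and at the terminal interval $\iota$, where fewer data sets are in flight and no new measurements are appended.
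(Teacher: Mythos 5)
Your argument is correct and is essentially the paper's own proof: the memory constraint together with the two-halves memory split gives $|\mathbb{D}_j|\leq \bar{M}/2$, so each transfer takes at most $\frac{\bar{M}}{2B}+T_d$, and the scheme's $T_{\mathrm{access}}=\Delta T/2\geq \frac{\bar{M}}{2B}+T_d$ yields \eqref{eq:netcon}. The additional scheduling bookkeeping you flag as an obstacle is simply not elaborated in the paper, which argues at the same level of informality about the timing diagram.
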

	\begin{proof}
		See \cref{app:B}.
	\end{proof}
	
	In order to apply this lemma in a real-world system, it remains to develop an approach for enforcing the memory constraint \eqref{eq:memcon} by choosing a suitable value of $\iota$. In practice, this value can be selected online using heuristics such that learning can be stopped, e.g., when the number of active models exceeds a threshold. Moreover, when the reference $\bm{x}_{\mathrm{ref}}$ is periodic, we can determine $\iota$ based on the data sets from previous periods, as shown in the following theorem.\looseness=-1
	\begin{theorem}\label{th:transscheme}
		Assume the reference trajectory is periodic with period $T_p=q\Delta T$ for $\Delta T\geq \frac{\bar{M}}{B}+2T_d$ and $q\in\mathbb{N}$. Let
		\begin{align}\label{eq:iota}
		\iota&=q+\min_{|\mathbb{D}_{j}|> \frac{\bar{M}}{2}-\bar{m}} j\\
		\bar{m}&=\max\limits_{j\in\mathbb{N}}|\mathbb{D}_{(j+q)}|-|\hat{\mathbb{D}}_{j}|\leq \left\lceil\frac{ T_p}{\tau}\right\rceil.
		\end{align}
		Then, Algorithms~\ref{alg:cloud} and~\ref{alg:local} ensure the satisfaction of the memory constraint \eqref{eq:memcon} and network constraint \eqref{eq:netcon}.
	\end{theorem}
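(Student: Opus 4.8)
The plan is to establish the two required properties separately, with the network constraint~\eqref{eq:netcon} following essentially for free from \cref{lem:netcon} and the memory constraint~\eqref{eq:memcon} carrying the argument. Since $\Delta T\geq\bar{M}/B+2T_d$ is assumed, \cref{lem:netcon} already reduces the network part to verifying~\eqref{eq:memcon}, so the whole proof comes down to showing that the choice~\eqref{eq:iota} of the stopping index $\iota$ keeps the local memory occupation below $\bar{M}$. Unwinding the transmission scheme of \cref{subsec:4.3}, at any time $t\in\mathbb{W}_j$ the local memory contains at most the necessary data set $\mathbb{D}_j$ of the running interval in one half of the memory and the set $\mathbb{D}_{j+1}$ being fetched for the next interval in the other half (the outgoing set $\mathbb{D}_{j-1}$ only transiently occupies the second half while it is being freed). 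Hence $|\mathbb{D}_t^{\mathrm{loc}}|\leq 2\max_k|\mathbb{D}_k|$, and it suffices to prove $|\mathbb{D}_j|\leq\bar{M}/2$ for every interval index $j$ that ever occurs, including the indices $j>\iota$ after learning has been stopped.

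For the memory bound I would partition the indices around $\iota=q+j^*$ with $j^*=\min\{j:|\mathbb{D}_j|>\bar{M}/2-\bar{m}\}$ and treat three regimes. For $j<j^*$ the bound $|\mathbb{D}_j|\leq\bar{M}/2-\bar{m}\leq\bar{M}/2$ is immediate by definition of $j^*$. For $j^*\leq j\leq\iota$, where real-time learning is still active, the key point is that~\eqref{eq:iota} deliberately reserves a slack of $\bar{m}$ below $\bar{M}/2$: applying the per-period comparison $|\mathbb{D}_j|\leq|\hat{\mathbb{D}}_{j-q}|+\bar{m}$ underlying the definition of $\bar{m}$, and using that the $q$-shifted index $j-q\leq j^*$ lands in the already-controlled range where the relevant quantity does not exceed $\bar{M}/2-\bar{m}$, one obtains $|\mathbb{D}_j|\leq\bar{M}/2$. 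Finally, for $j>\iota$ learning has stopped, so the tree of GPs and the weights $\omega_l(\cdot)$ are frozen; combined with the periodicity $T_p=q\Delta T$ of $\bm{x}_{\mathrm{ref}}$ and the fact that $\kappa_t$, hence the tube radius $\vartheta(\kappa_t)$ of \cref{ass:errbound}, stops growing once no new data enters it, the active-model sets $\hat{\mathbb{A}}_j$ and therefore the necessary data sets $\mathbb{D}_j$ become periodic in $j$ with period $q$, so they coincide with those of the last learning period and inherit the bound $\bar{M}/2$.

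It remains to justify the estimate $\bar{m}\leq\lceil T_p/\tau\rceil$ and to tie up the pipeline bookkeeping, which is where I expect the real effort to lie rather than in any single inequality. For $\bar{m}$: by the monotonicity of $\vartheta$ and of $t\mapsto\kappa_t$, the tube over $\mathbb{W}_{j+q}$ contains the tube over $\mathbb{W}_j$ (the reference is at the same phase by periodicity), so $\hat{\mathbb{A}}_j\subseteq\hat{\mathbb{A}}_{j+q}$ and the data in $\mathbb{D}_{j+q}$ not already in $\hat{\mathbb{D}}_j$ consists solely of samples collected in the active region during the single period separating the two intervals; by \cref{ass:data} at most $\lfloor(j+q)\Delta T/\tau\rfloor-\lfloor j\Delta T/\tau\rfloor\leq\lceil T_p/\tau\rceil$ samples are generated per period, which bounds the difference. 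The main obstacle is then the careful accounting of the transmission pipeline: one has to track that the data cutoff used by \cref{alg:decrease} to assemble $\mathbb{D}_{j+1}$ lags the interval index $j$ (the cloud only holds the data up to the end of the previous interval when it runs the routine), that sending $\mathbb{D}_{j-1}$ and fetching $\mathbb{D}_{j+1}$ both fit into one interval under $\Delta T\geq\bar{M}/B+2T_d$, that the extremal index $j=\iota$, i.e.\ the exact $q$-shift of $j^*$, is covered — the delicate case, where the precise definition of $\hat{\mathbb{D}}_j$ in the transmission algorithms is needed — and that the small-index edge cases $j^*\leq q$ are handled directly. Once these are in place, the three-regime argument closes the proof.
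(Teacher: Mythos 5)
Your proposal is correct and follows essentially the same route as the paper: the network constraint is discharged via \cref{lem:netcon}, and the memory constraint rests on the per-period growth bound $|\mathbb{D}_{j+q}|\leq|\mathbb{D}_j|+\bar{m}$ together with the fact that $\iota$ stops learning one period after $|\mathbb{D}_j|>\bar{M}/2-\bar{m}$ is first violated. Your version is considerably more detailed than the paper's three-sentence argument (the three-regime split, the post-$\iota$ periodicity, and the bound $\bar{m}\leq\lceil T_p/\tau\rceil$ are all left implicit there), but the underlying idea is the same.
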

	\begin{proof}
		See \cref{app:C}.
	\end{proof}
	
	This theorem allows to determine online when to stop adding new training samples to the LoG-GP by checking if $|\mathbb{D}_{j}|> \frac{\bar{M}}{2}-\bar{m}$, which can be performed with low complexity and can be directly implemented. Moreover, it provides valuable insight into the interrelations between achievable tracking accuracy, memory constraint $\bar{M}$, time delay $T_d$ and limited bandwidth $B$. In order to see this, note that the data set size $|\mathbb{D}_j|$ usually grows almost linearly with the interval length $\Delta T$. Since an increase in bandwidth $B$ admits smaller $\Delta T$, learning can continue up to higher values of $\iota$ in general. Therefore, a higher data density can be achieved, which in turn yields a lower GP variance~\cite{Lederer2019a} guaranteeing a smaller tracking error. In contrast, an increase in local memory $\bar{M}$ admits larger data set sizes $|\mathbb{D}_j|$, but in turn requires longer intervals $\Delta T$, such that the achievable data density and consequently the tracking accuracy are barely affected. Finally, a reduction of the delay $T_d$ allows smaller values of $\Delta T$ and thereby also leads to an improvement in achievable control performance. Therefore, available bandwidth $B$ for data transmission and time delay $T_d$ are crucial for the achievable tracking accuracy when using the networked online learning control law, while finite local memory $\bar{M}$ only has secondary relevance to enable implementation of the transmission scheme using Algorithm~\ref{alg:cloud} and~\ref{alg:local}. This insight can be beneficially used for the design of autonomous systems in practice, since it allows a reduction of local memory when sufficient bandwidth for data transmission is available. \looseness=-1

	\begin{algorithm}[t]
		\small
		\SetInd{0.7em}{0.6em}
		\DontPrintSemicolon
		\SetKwFunction{FSOP}{ActiveModels}
		\SetKwFunction{transmit}{Transmit}
		\SetKwFunction{receive}{Receive}
		\SetKwFunction{update}{Update}
		\SetKwFunction{Loop}{UpdateLoop}
		\SetKwProg{Fn}{Function}{:}{}
		\Fn{\Loop{$\Delta T$, $\tau$, $\Delta t$}}{
			\For{$n=1,\ldots,\infty$}{
				\If{$n\tau \geq j\Delta T$}{
					$j\gets j+1$\;
					$\mathbb{D}_{j-1}\gets$\receive()\;
					$\hat{\mathbb{A}}_{j+1}\!\gets$\FSOP$\!\!(N_s, j\Delta T,(j\!+\!1)\Delta T,\Delta t)$\;
					\transmit($\mathbb{D}_{j+1}$)\;
				}
			}
		}
		\caption{Data Transfer Scheme: Cloud}
		\label{alg:cloud}
	\end{algorithm}
	
	\begin{algorithm}[t]
		\small
		\SetInd{0.7em}{0.6em}
		\DontPrintSemicolon
		\SetKwFunction{FSOP}{getActiveModels}
		\SetKwFunction{transmit}{Transmit}
		\SetKwFunction{receive}{Receive}
		\SetKwFunction{update}{Update}
		\SetKwFunction{delete}{Delete}
		\SetKwFunction{Loop}{UpdateLoop}
		\SetKwProg{Fn}{Function}{:}{}
		\Fn{\Loop{$\Delta T$, $\iota$, $\tau$}}{
			\For{$n=1,\ldots,\infty$}{
				\If{$n\tau\leq \iota \Delta T$}{
					$\mathbb{D}_t^{\mathrm{loc}}\gets \mathbb{D}_t^{\mathrm{loc}}\cup (\bm{x}^{(n)},y^{(n)})$\;
				}
				\If{$n\tau \geq j\Delta T$}{
					
					$j\gets j+1$\;
					\transmit($\mathbb{D}_{j-1}$)\;
					\delete($\mathbb{D}_{j-1}$)\;
					$\mathbb{D}_{j+1}\gets$\receive()\;
				}
			}
		}
		\caption{Data Transfer Scheme: Local System}
		\label{alg:local}
	\end{algorithm}

	\section{EFFICIENT TRACKING ERROR BOUNDS}\label{sec:trackbound}
	
	In order to demonstrate the applicability of the proposed networked online learning approach, we exemplarily derive a tracking error bound $\vartheta(\kappa_t)$ for a feedback linearizing control law, which can be applied to a wide range of practically relevant systems such as robotic manipulators, unmanned aerial and autonomous underwater vehicles. For the derivation of $\vartheta(\kappa_t)$ we employ Lyapunov stability theory, such that computing $\vartheta(\kappa_t)$ effectively reduces to determining the regions of the state space with decreasing Lyapunov function along system trajectories. Due to the prediction error bound $\eta(\cdot)$, this decrease condition can be efficiently decoupled for feedback linearizing control laws as illustrated in \cref{fig:trackbound}. Thereby, we obtain a straightforwardly implementable tracking error bound, which can be directly used in Algorithms~\ref{alg:cloud} and~\ref{alg:local}.\looseness=-1
	
	\begin{figure}
		\centering
			\begin{tikzpicture}[font=\sffamily,scale=0.97]
					
					\fill[red!20] (4.0,1.47) -- (0.6,1.8) -- (0.6,-0.1) -- (4.0,1.47);
			
					\draw[blue, {latex}-{latex}] (0.6,0.85)--(0.6,1.4);
					\draw[cyan, {latex}-{latex}] (0.6,1.4)--(0.6,1.8);
					
					\draw[blue, {latex}-{latex}] (0.6,0.85)--(0.6,0.3);
					\draw[cyan, {latex}-{latex}] (0.6,0.3)--(0.6,-0.1);
					
					\begin{scope}
					\clip(-0.45cm,-0.3cm) rectangle (6.5cm,2.4cm);
					\foreach \i in {1.4}{
						\draw (0,0) ellipse (\i*4cm and \i*1.5 cm);}
					\end{scope}
					
					\node[anchor = north] at (6.0,0.25) {\footnotesize $V(\bm{e})$};
					\draw[very thick]  (4.0,1.47) node[cross=0.1cm] (xk) {};
					\node[anchor = north] at (xk) {\footnotesize $\bm{e}$};
					
					
					\draw[very thick]  (0,0) node[cross=0.1cm] (O) {};
					\node[anchor = north] at (O) {\footnotesize $\bm{e}=0$};
					
					\draw[-{latex}] (xk) -- (0.6,0.85);
					
					\node at (1.8,1.3) {\footnotesize  $\dot{\bm{e}}\!=\!\bm{A}\bm{e}$};
					\node[blue] at (-0.6,0.6) {\footnotesize  $\begin{bmatrix}0\vspace{-0.2cm}\\\vdots\vspace{-0.05cm}\\1 \end{bmatrix}\!\!(L_f\!+\!L_{\tilde{\mu}})\|\bm{e}\|$};
					\node[cyan] at (-0.2,1.6) {\footnotesize  $\begin{bmatrix}0\vspace{-0.2cm}\\\vdots\vspace{-0.05cm}\\1 \end{bmatrix}\!\eta(\bm{x}_{\mathrm{ref}})$};
					
					\draw[-{latex}, dashed, red] (xk) -- (0.6,0.1);
					
					
					\node at (2.6,0.55) [red] {\footnotesize  unknown};
					\node at (2.6,0.2) [red] {\footnotesize  dynamics};
			
			\end{tikzpicture}
		\vspace{-0.4cm}
		\caption{Ultimate boundedness is analyzed using Lyapunov theory with Lyapunov function $V(\bm{e})=\bm{e}^T\bm{P}\bm{e}$. For exact feedback linearization, the closed-loop dynamics are linear with $\dot{\bm{e}}=\bm{A}\bm{e}$ and ensure a decreasing Lyapunov function along system trajectories. Due to model errors, which can be expressed through the model error bound $\eta(\bm{x}_{\mathrm{ref}})$ at the reference $\bm{x}_{\mathrm{ref}}$ and the error $(L_f\!+\!L_{\tilde{\mu}})\|\bm{e}\|$ of linearization around the reference, the uncertainty in the dynamics can be bounded. Thereby, the region with decreasing Lyapunov function can be efficiently determined.}
		\label{fig:trackbound}
		
	\end{figure}

	In more detail, we consider feedback linearizable systems
	\begin{equation} \label{equ:FELIsystem}
	\dot{x}_{1}=x_{2}, \quad \dot{x}_{2}=x_{3}, \quad \ldots \quad \dot{x}_{d_x}=f(\boldsymbol{x})+g(\bm{x})u,
	\end{equation}
	where a scalar control input $u$ as well as scalar functions $f:\mathbb{X}\rightarrow\mathbb{R}$ and $g:\mathbb{X}\rightarrow \mathbb{R}$ are assumed only for simplicity of exposition, while all derived results straightforwardly extend to multi-input systems in the canonical form. 
	Similar to previous work \cite{0-6_umlauft2019feedback}, we assume that $f(\cdot)$ is an unknown function, while~$g(\cdot)$ is known. The knowledge of $g(\cdot)$ is merely used to streamline the presentation, but all results can be extended to unknown functions $g(\cdot)$ following the approach in \cite{ Lederer2020a}. In order to ensure global controllability of \eqref{equ:system}, the following assumption is needed.\looseness=-1
	\begin{assumption}\label{ass:g}
		The function $g(\cdot)$ is positive, i.e., $g(\bm{x})\!>\! 0 $.
	\end{assumption}
	This assumption is a standard condition when designing control laws for systems in the canonical form \cite[Definition 13.1]{0-7_khalil2002nonlinear}, and ensures the non-singularity of $g(\cdot)$. It is naturally satisfied by many systems such as Euler-Lagrange systems, where $g(\cdot)$ corresponds to the positive inertia. Therefore, this assumption is not restrictive in practice.
	
	Additionally, we assume that the unknown function is well-behaved, which is formalized in the following.
	\begin{assumption}\label{ass:f}
		The function $f(\cdot)$ is $L_f$-Lipschitz.
	\end{assumption}
	This assumption globally ensures a unique solution for the system \eqref{equ:system} \cite{0-7_khalil2002nonlinear}, such that it can be commonly found in control. Since it is satisfied by many systems such as Euler-Lagrange dynamics in practice, it is not restrictive. 
	
	In order to allow the accurate tracking of the reference trajectory with system \eqref{equ:FELIsystem}, we consider reference trajectories\looseness=-1
	\begin{align}
	\boldsymbol{x}_{\mathrm{ref}}\left(t\right)=\begin{bmatrix}x_{\mathrm{ref}}(t) & \dot{x}_{\mathrm{ref}}(t) &\cdots& \frac{\mathrm{d}^{d_x-1}}{\mathrm{d} t^{d_x-1}} x_{\mathrm{ref}}(t)\end{bmatrix}^{T},
	\end{align}
	where $x_{\mathrm{ref}}:\mathbb{R}\rightarrow\mathbb{R}$ is $d_x$ times continuously differentiable. For tracking this trajectory, we employ the feedback linearizing control law 
	\begin{equation} \label{equ:FL}
	u=\pi_{\mathrm{FL}}(\boldsymbol{x})=\frac{1}{g(\bm{x})}\left(-\tilde{\mu}(\boldsymbol{x})+\nu+\frac{\mathrm{d}^{d_x}}{\mathrm{d} t^{d_x}} x_{\mathrm{ref}}(t)\right),
	\end{equation} 
	where the mean $\tilde{\mu}(\bm{x})$ defined in \eqref{eq:log-gp mean} is used as model. The input $\nu$ to the approximately linearized system is given by the linear feedback law $\nu=-k_c\begin{bmatrix}\lambda_1&\cdots&\lambda_{d_x-1}&1 \end{bmatrix}\bm{e}$, 
	where $k_c\in\mathbb{R}_+$ is the control gain and $\lambda_1,\ldots,\lambda_{d_x-1}\in\mathbb{R}$ are coefficients such that for $s\in\mathbb{C}$, the polynomial $s^{d_x-1}+\lambda_{d_x-1}s^{d_x-2}+\ldots+\lambda_1$ is Hurwitz. Due to these choices, the error dynamics can be compactly expressed by\looseness=-1
	\begin{equation} \label{eq:edyn}
	\dot{\boldsymbol{e}}\!=\!\!\underbrace{\begin{bmatrix}
		\bm{0} & \bm{I}\\
		-\lambda_1 k_c& -\!\begin{bmatrix}
		\lambda_2 k_c&\cdots&k_c
		\end{bmatrix}
		\end{bmatrix}}_{\boldsymbol{A}}\!\! \boldsymbol{e}+(f(\boldsymbol{x})\!-\!\tilde{\mu}(\boldsymbol{x}))\!\!\begin{bmatrix}0\vspace{-0.15cm}\\\vdots\\1 \end{bmatrix}\!\!.\!\!
	\end{equation} 
%
	The matrix $\bm{A}$ defines a stable dynamical system because of the Hurwitz coefficients $\lambda_i$, which is independent of the online learning. Therefore, the second summand in \eqref{eq:edyn} can be considered a disturbance depending on the online learning, which allows us to straightforwardly analyze the ultimate boundedness of this system using Lyapunov theory.
	
	\begin{theorem} \label{the:main theorem}
		Consider a dynamical system~\eqref{equ:system}, where $f(\cdot)$ is a sample from a Gaussian process with $L_k$-Lipschitz kernel $k(\cdot,\cdot)$. Moreover, assume Assumptions~\ref{ass:data},~\ref{ass:g} and~\ref{ass:f} hold, and choose a positive definite, symmetric matrix $\bm{Q}\in\mathbb{R}^{d\times d}$ and a control gain $k_c$ such that 
		\begin{align}\label{eq:gain cond}
		\|\bm{p}_d(k_c)\|< \frac{\lambda_{\min}(\bm{Q})}{2(L_f+L_{\tilde{\mu}})},
		\end{align}
		where $\bm{P}\!=\![\bm{p}_1(k_c)\ \cdots\ \bm{p}_d(k_c)]$ is the solution to the Lyapunov equation $\bm{A}^T\bm{P}\!+\!\bm{P}\bm{A}\!=\!-\bm{Q}$ and $L_{\tilde{\mu}}$ denotes the Lipschitz constant of the LoG-GP mean $\tilde{\mu}(\cdot)$. Then, the online learning feedback linearizing control law \eqref{equ:FL} guarantees an ultimately bounded tracking error\looseness=-1
		\begin{align}\label{eq:ultbound}
		\!\vartheta(\kappa_t)\!=\!\frac{2\left\|\boldsymbol{p}_{d}(k_c)\right\|    \sqrt{\!\lambda_{\max}(\bm{P})}\max_{t'\!\in [0,t]}\eta\left(\boldsymbol{x}_{\mathrm{ref}}(t') \right)}{\left(\lambda_{\min}(\bm{Q})\!-\!2\left\|\boldsymbol{p}_{d}(k_c)\right\|\left(L_{f}\!+\!L_{\tilde{\mu}}\right) \right)\! \sqrt{\!\lambda_{\min}(\bm{P})}}.\!
		\end{align}
	\end{theorem}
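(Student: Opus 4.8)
The plan is to analyze the ultimate boundedness of the error dynamics \eqref{eq:edyn} using the quadratic Lyapunov function $V(\bm{e})=\bm{e}^T\bm{P}\bm{e}$, where $\bm{P}$ solves $\bm{A}^T\bm{P}+\bm{P}\bm{A}=-\bm{Q}$. Such a $\bm{P}$ exists and is positive definite because $\bm{A}$ is Hurwitz by the choice of the coefficients $\lambda_i$ and the positive gain $k_c$. Differentiating $V$ along \eqref{eq:edyn} gives $\dot V = -\bm{e}^T\bm{Q}\bm{e} + 2\bm{e}^T\bm{P}\bm{b}\,(f(\bm{x})-\tilde\mu(\bm{x}))$ with $\bm{b}=[0\ \cdots\ 0\ 1]^T$, so that $\bm{e}^T\bm{P}\bm{b}=\bm{e}^T\bm{p}_d(k_c)$. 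The first term is bounded above by $-\lambda_{\min}(\bm{Q})\|\bm{e}\|^2$, and for the second term I would bound $|\bm{e}^T\bm{p}_d(k_c)|\leq\|\bm{p}_d(k_c)\|\,\|\bm{e}\|$.

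The key step is to decouple the disturbance $|f(\bm{x})-\tilde\mu(\bm{x})|$ into a part evaluated at the reference and a part controlled by the tracking error itself, exactly as depicted in \cref{fig:trackbound}. Writing $|f(\bm{x})-\tilde\mu(\bm{x})| \leq |f(\bm{x})-f(\bm{x}_{\mathrm{ref}})| + |f(\bm{x}_{\mathrm{ref}})-\tilde\mu(\bm{x}_{\mathrm{ref}})| + |\tilde\mu(\bm{x}_{\mathrm{ref}})-\tilde\mu(\bm{x})|$ and using the Lipschitz constants $L_f$ and $L_{\tilde\mu}$ together with $\|\bm{x}-\bm{x}_{\mathrm{ref}}(t)\|=\|\bm{e}\|$, this is at most $(L_f+L_{\tilde\mu})\|\bm{e}\| + |f(\bm{x}_{\mathrm{ref}}(t))-\tilde\mu(\bm{x}_{\mathrm{ref}}(t))|$. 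By \cref{the:Regression Error Bound}, the last term is bounded with probability at least $1-\delta$ by $\eta(\bm{x}_{\mathrm{ref}}(t))$, which in turn is at most $\max_{t'\in[0,t]}\eta(\bm{x}_{\mathrm{ref}}(t'))$. Substituting back yields
\begin{align}
\dot V \leq -\big(\lambda_{\min}(\bm{Q}) - 2\|\bm{p}_d(k_c)\|(L_f+L_{\tilde\mu})\big)\|\bm{e}\|^2 + 2\|\bm{p}_d(k_c)\|\,\|\bm{e}\|\max_{t'\in[0,t]}\eta(\bm{x}_{\mathrm{ref}}(t')).
\end{align}
Condition \eqref{eq:gain cond} guarantees the coefficient $c_Q := \lambda_{\min}(\bm{Q}) - 2\|\bm{p}_d(k_c)\|(L_f+L_{\tilde\mu})$ is strictly positive, so $\dot V<0$ whenever $\|\bm{e}\| > 2\|\bm{p}_d(k_c)\|\max_{t'}\eta(\bm{x}_{\mathrm{ref}}(t'))/c_Q$.

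From this the ultimate bound follows by the standard argument (e.g.\ \cite{0-7_khalil2002nonlinear}): $V$ decreases until $\bm{e}$ enters the sublevel set where $\dot V$ may become nonnegative, and the smallest sublevel set of $V$ containing the ball $\{\|\bm{e}\|\leq 2\|\bm{p}_d(k_c)\|\max_{t'}\eta/c_Q\}$ is itself contained in a ball of radius $\sqrt{\lambda_{\max}(\bm{P})/\lambda_{\min}(\bm{P})}$ times that radius, using $\lambda_{\min}(\bm{P})\|\bm{e}\|^2\leq V(\bm{e})\leq\lambda_{\max}(\bm{P})\|\bm{e}\|^2$. This gives precisely $\vartheta(\kappa_t)$ as stated in \eqref{eq:ultbound}. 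Finally, I would check that $\vartheta$ is monotonically increasing in $t$, which holds because $\max_{t'\in[0,t]}\eta(\bm{x}_{\mathrm{ref}}(t'))$ is nondecreasing in $t$, so \cref{ass:errbound} is satisfied with $\kappa_t$ as defined there; one also needs to observe that $\eta(\bm{x}_{\mathrm{ref}}(t'))$ upper-bounds the pointwise model error $|f(\bm{x}(t'))-\tilde\mu(\bm{x}(t'))|$ evaluated along the trajectory only on the event of probability $1-\delta$, so the statement is implicitly a high-probability one inherited from \cref{the:Regression Error Bound}. The main obstacle is handling the interplay between the Lipschitz linearization error $(L_f+L_{\tilde\mu})\|\bm{e}\|$ appearing on both sides of the inequality — it must be absorbed into the decay term rather than treated as an exogenous disturbance — which is exactly what the gain condition \eqref{eq:gain cond} is engineered to permit; a secondary subtlety is that $L_{\tilde\mu}$ itself depends on the current tree of GP models, so the bound is understood to hold for the model in force at time $t$.
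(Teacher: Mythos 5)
Your proposal is correct and follows essentially the same route as the paper's proof: the quadratic Lyapunov function $V(\bm{e})=\bm{e}^T\bm{P}\bm{e}$, the decomposition of $|f(\bm{x})-\tilde{\mu}(\bm{x})|$ into the Lipschitz linearization term $(L_f+L_{\tilde{\mu}})\|\bm{e}\|$ plus the error $\eta(\bm{x}_{\mathrm{ref}})$ at the reference via \cref{the:Regression Error Bound}, absorption of the former into the decay term using the gain condition \eqref{eq:gain cond}, and the sublevel-set over-approximation yielding the factor $\sqrt{\lambda_{\max}(\bm{P})/\lambda_{\min}(\bm{P})}$. Your added remarks on monotonicity of $\vartheta$ and the high-probability nature of the bound are consistent with how the paper uses the result and do not change the argument.
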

	\vspace{0.3cm}
	\begin{proof}
		See \cref{app:D}.
	\end{proof}
	This theorem has the advantage over previously derived results on similar settings \cite{0-5_lederer2019uniform, 0-6_umlauft2019feedback} that the ultimate bound $\vartheta$ in \cref{the:main theorem} depends only on the standard deviation along the reference $\bm{x}_{\mathrm{ref}}$. Thereby, the ultimate bound \eqref{eq:ultbound} can be efficiently computed, whereas the results in previous works provide only an implicit representation of the ultimate bound due to the dependency of the GP error bound on the state $\bm{x}$. This advantage resulting from the linearization around the reference $\bm{x}_{\mathrm{ref}}$ in \eqref{eq:lin_xref} comes at the cost of the additional constraint \eqref{eq:gain cond} for the control gain~$k_c$ compared to existing approaches~\cite{0-5_lederer2019uniform, 0-6_umlauft2019feedback}. Even though this constraint makes the application of \cref{the:main theorem} more restrictive, this weakness is strongly outweighed by the benefit of the explicit tracking error bound for determining the active models online.
	
	\begin{remark}
		Due to the definition of $\bm{A}$ in \eqref{eq:edyn}, it can be straightforwardly checked that it is always possible to ensure the satisfaction of \eqref{eq:gain cond} for a fixed matrix $\bm{Q}$ by choosing a sufficiently large gain $k_c$. Therefore, condition~\eqref{eq:gain cond} effectively imposes a lower bound for the control gains~$k_c$ admitting ultimate tracking error bounds \eqref{eq:ultbound}.
	\end{remark}

\section{EVALUATION IN EXOSKELETON CONTROL}\label{sec:eval}
\setlength{\textfloatsep}{10pt}
\setlength{\floatsep}{13pt}

In order to evaluate the applicability of the proposed networked on-line learning approach for resource constrained systems\footnote{Open-source code conceptually demonstrating the proposed method is available at \url{https://gitlab.lrz.de/online-GPs/cloud-GPs}.}, we employ it for the control of an upper-limb human-exoskeleton assisting a user in tracking a reference trajectory, which is simulated in Julia \cite{Bezanson2017}, a modern programming language for accelerating physics simulations. Since the exoskeleton is intended to be used in a portable manner, this scenario resembles an example for a wearable robotic system with memory and computational constraints. These constraints are particularly challenging for the control of the exoskeleton as human user data is required in practice to infer models allowing for personalized assistance.

\begin{figure}
\vspace{0.15cm}
	\centering
	\includegraphics[width=0.25\textwidth]{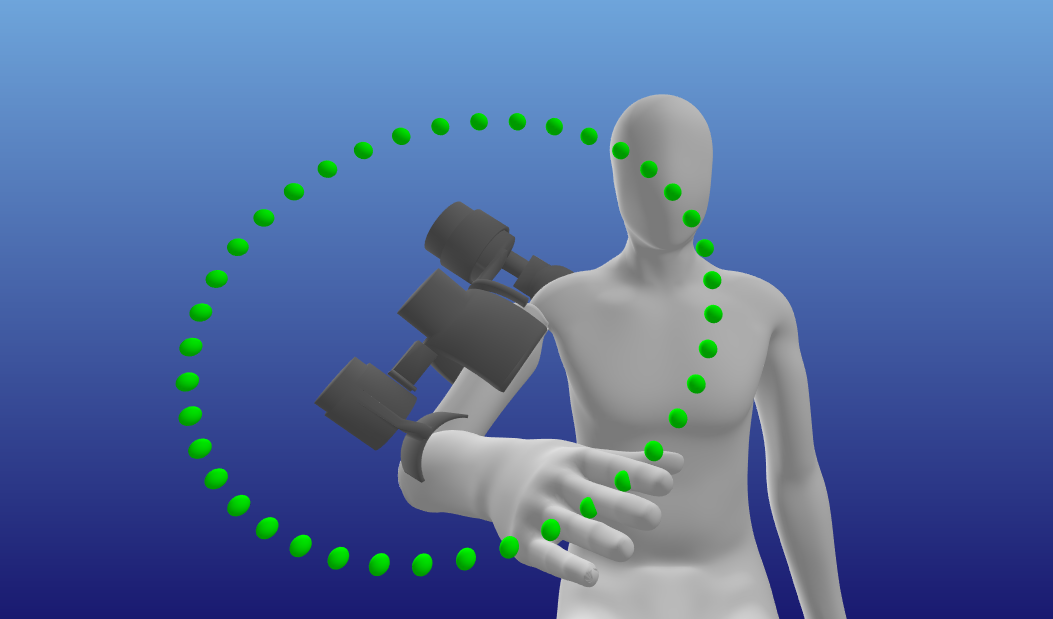}
	\vspace{-0.3cm}
	\caption{Visualization of the upper-limb human-exoskeleton simulation and trajectory tracking task. The green circles depict discrete points along the elliptic reference trajectory, which must be followed with the hand. }
	\label{fig:sim_vis}
\end{figure}

For the simulation, we assume a rigid kinematic coupling between the human and exoskeleton arm, which allows the modelling of both as one kinematic chain consisting of four DoFs. The exoskeleton model is based on the design described in \cite{Trigili2020}, whilst the model parameter for the human are chosen according to anthropometric tables \cite{drillis_body_1964}. Here, the reference is set to $70\,\si{\kilogram}$ and $1.75\,\si{\meter}$. As illustrated in \cref{fig:sim_vis}, the goal is to track an elliptic trajectory with the hand of the human by employing the learning-based feedback linearizing control law \eqref{equ:FL}. Each period of the ellipse takes $T_p=6\si{\second}$, the simulation runs at $1\si{\kilo\Hz}$, and we consider a memory constraint of $\bar{M}=4000$ data pairs for the local memory. Streaming data for online learning is generated with noise standard deviation $\sigma_\text{on}=0.05$ at a sampling rate of $100\si{\hertz}$, i.e., $\tau=10 \si{ms}$. Each local GP model can contain a maximum of $\bar{N}=100$ training points and the hyperparameters are set to $\sigma_f = 1$, $l_i=1$/$l_i=3$ for inputs corresponding to joint angles/angular velocities. Algorithm~\ref{alg:decrease} is run with temporal discretization $\Delta t = 10 \si{ms}$ and $N_s=1000$ random samples. Finally, the control gains are set to $k_c=400$ and $\lambda=1$.

\begin{table}[!t]
	\vspace{0.05cm}
	\caption{Network bandwidth $B$, time delay $T_d$ and resulting time when learning is stopped $T_s$}\label{tab:med_size}
	\centering
	\begin{scriptsize}
	\vspace{-0.35cm}
			\begin{tabular}{l c c c c}
				\toprule
				 & low & medium & high & large\\
				  & bandwidth & bandwidth & bandwidth & delay\\
				\midrule
				$B$ [$\si{samples/s}$] & $1500$ & $3000$ & $10000$ & $10000$\\
				$T_d$ [$\si{s}$] & $0.1$ & $0.1$ & $0.1$ & $1.0$\\
				$T_s$ [$\si{s}$] & $44.67$ & $60.04$ & --- & $30.81$\\
				\bottomrule
			\end{tabular}
		
	\end{scriptsize}
	
\end{table}

In order to investigate the dependency of the tracking accuracy and memory occupation on the network bandwidth $B$ and time delay $T_d$, we compare networked LoG-GP controllers with access to different network connections as outlined in \cref{tab:med_size}. 
Additionally, we employ 
a LoG-GP without memory constraints, i.e., $\bar{M}=\infty$, as baseline to demonstrate the absence of a performance loss of the networked LoG-GP when a sufficiently high bandwidth is available. 
The average update time for the LoG-GP is $0.3\si{ms}\!<\!\tau$ in all simulations, and the resulting curves for the evolution of the local memory occupation are depicted in \cref{fig:numData}. Since the LoG-GP has low accuracy during the first period, the tracking error bound $\vartheta(\kappa)$ is large during the first $6\si{s}$, such that all data is required on the local system. After this period, the different curves exhibit the behavior discussed in \mbox{\cref{subsec:4.3}}: The lower the bandwidth $B$, the faster the local memory consumption grows. Moreover, an increase in time delay $T_d$ causes a significantly faster growing memory occupation. Due to the limited local memory, this leads to an early stop in learning at the times depicted in \cref{tab:med_size},
after which the memory occupation stagnates.\looseness=-1

\begin{figure}
\vspace{0.15cm}
	\centering
	\includegraphics[width=0.47\textwidth]{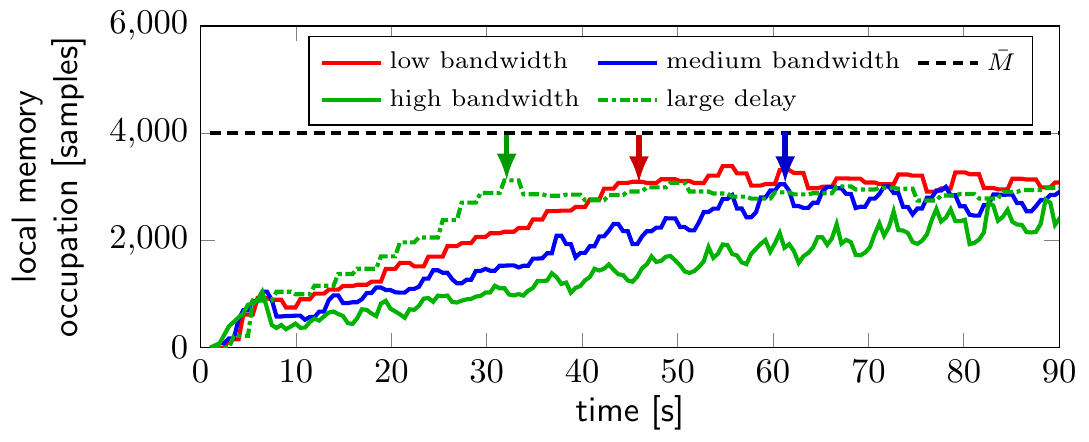}\vspace{-0.4cm}
	\caption{The higher the bandwidth $B$, the longer the LoG-GP can learn before the number of training pairs in the local memory reaches the limitations. Large time delay $T_d$ causes a significantly earlier stopping of learning, as indicated by the arrows. \looseness=-1}
	\label{fig:numData}
\end{figure}

\begin{figure}
	\centering
	\includegraphics[width=0.47\textwidth]{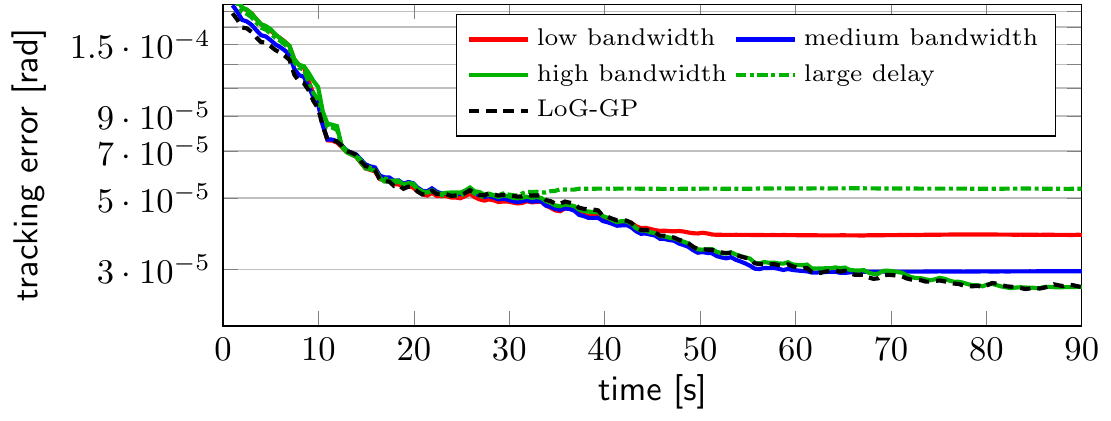}\vspace{-0.4cm}
	\caption{When the memory limitation is reached and the learning process stops, the tracking error stagnates. Since higher bandwidths $B$ allow learning for a longer time, larger values of $B$ yield lower tracking errors eventually. While learning is not stopped, networked LoG-GPs ensure the same tracking accuracy as LoG-GPs without any constraints. Overall, online learning significantly improves the tracking accuracy over the baseline case without model learning, which is not depicted since it permanently exceeds $2\cdot 10^{-2}.$\looseness=-1}
	\label{fig:tracking_error}
\end{figure}

This stagnation has an immediate effect on the evolution of the tracking error, as illustrated in \cref{fig:tracking_error}. While the feedback linearizing control law \eqref{equ:FL} with networked LoG-GP model achieves the same improvement in tracking accuracy as with the unconstrained LoG-GP when model updates are performed, the tracking performance ceases to improve and effectively remains constant after the learning has stopped. Due to the continual learning of the networked LoG-GP with a high bandwidth connection, the corresponding evolution of the tracking error is visually identical to the curve resulting from usage of the LoG-GP without memory constraint. This clearly demonstrates that the proposed approach allows a transfer of data to the cloud without any loss in performance when sufficient transmission bandwidth is available. Moreover, even when online learning has to be stopped early, it still yields a significant improvement in tracking accuracy compared to the baseline case without model learning, where a stationary error of $\approx 2\cdot 10^{-2}$ rad has been observed. This strongly underlines the advantages of online model inference for model-based control despite resource constraints.

\section{CONCLUSION}\label{sec:conc}
This paper presents a novel networked online learning approach for control of safety-critical systems with local resource constraints based on Gaussian process regression. By employing a tree-structured local GP approximation, relevant local models for control can be efficiently determined in a sampling-based fashion. This is exploited in the design of an effective data transmission scheme, which ensures the timely availability of data in the local computing unit. The effectiveness of the proposed networked online learning approach is demonstrated in a simulation of a robotic exoskeleton.

\appendix

\subsection{Proof of \cref{th:sample_active}}\label{app:A}
\begin{proof}
			Due to \cref{ass:errbound}, at each time $t$, the tracking error $\bm{e}$ is bounded by $\vartheta(\kappa_t)$. It is straightforward to see that $\bm{x}_{\mathrm{ref}}(\cdot)$ is Lipschitz continuous, such that \looseness=-1
			\begin{align}
			\!\|\bm{e}(t)\|\leq \xi\!-\!2\zeta ~~ \forall t\!\in \!\Big[t_1\!+\!\frac{2j\!-\!1}{2}\Delta t,t_1\!+\!\frac{2j\!+\!1}{2}\Delta t\Big]\!
			\end{align}
			and consequently $\mathbb{T}_{t_1}^{t_2}\subset \bigcup_{j=1}^{\lceil\frac{t_2-t_1}{\Delta t} \rceil}\mathbb{B}_{\xi-2\zeta}$. 
			Therefore, it remains to show that the set of active models for time $t_1+j\Delta t$ defined as $\mathbb{A}_{t_1+j\Delta t}=\bigcup_{\bm{x}\in \mathcal{B}_{\xi-2\zeta}}\mathbb{A}_{\bm{x}}$ 
			is overapproximated by Algorithm~\ref{alg:decrease}. For this purpose, choose any model $l\in\mathbb{A}_{t+j\Delta t}$. Then, the intersection between the active region $\mathbb{X}_l$ of this model and the ball $\mathcal{B}_{\xi}$ has a volume of at least $ \pi^{d_x/2}(\min\{r_{\min},\zeta\})^{d_x}/\Gamma(\frac{d_x}{2}+1)$, where $\Gamma:\mathbb{R}_+\rightarrow\mathbb{R}_+$ denotes Euler's gamma function
			. Therefore, the probability of a sample $\bm{x}^{(i)}\sim\mathcal{U}(\mathcal{B}_{\xi})$ being in the active region of model $l$ can be bounded by
			\begin{align}
			P(\omega_l(\bm{x}^{(i)})>0|l\in\mathbb{A}_{t_1+j\Delta t})\geq \frac{\min\{r_{\min}^{d_x},\zeta^{d_x}\}}{\xi^{d_x}}.
			\end{align}
			The probability of none of the $N_s$ samples falling into the active region $\mathbb{X}_l$ is consequently upper bounded by $(1\!-\!P(\omega_l(\bm{x}^{(i)})\!>\!0|l\!\in\!\mathbb{A}_{t_1+j\Delta t}))^{N_s}$,
			such that \eqref{eq:activeModelProb} follows from the union bound over all time steps and all models $l\in\mathbb{L}$.
\end{proof}

\subsection{Proof of \cref{lem:netcon}}\label{app:B}
\begin{proof}
		Satisfaction of the memory constraint~\eqref{eq:memcon} implies that the transmission of $\mathbb{D}_{j}$, $j\!\in\!\mathbb{N}$, can be achieved with time~$T_{\mathrm{trans}}\!\leq\! \frac{\bar{M}}{2B}\!+\!T_d$. 
		Hence, we have~$T_{\mathrm{access}}\!=\!\frac{\Delta T}{2}\!\geq\! T_{\mathrm{trans}}$, guaranteeing satisfaction of the network constraint \eqref{eq:netcon}. \looseness=-1
\end{proof}

\subsection{Proof of \cref{th:transscheme}}\label{app:C}
\begin{proof}
		Since the cardinality of $\mathbb{D}_{j+q}$ can be bounded by $|\mathbb{D}_{j+q}|\leq |\mathbb{D}_{j}|+\bar{m}$, memory constraints are satisfied as long as $|\mathbb{D}_{j}|\leq \bar{M}/2-\bar{m}$. Therefore, $\iota$ as defined in \eqref{eq:iota} ensures that the memory constraint \eqref{eq:memcon} is satisfied, which implies the satisfaction of the network constraint \eqref{eq:netcon} due to \cref{lem:netcon}.\looseness=-1
\end{proof}

\subsection{Proof of \cref{the:main theorem}}\label{app:D}
\begin{proof}
		In order to prove the ultimate bound, we employ the Lyapunov function $V(\boldsymbol{e})\!=\!\boldsymbol{e}^{T} \boldsymbol{P} \boldsymbol{e}$, where $\bm{P}$ is a positive definite matrix as $\bm{\lambda}$ is a Hurwitz vector. 
		The derivative of the Lyapunov function is guaranteed to satisfy\looseness=-1
		\begin{align}
		\dot{V}(\boldsymbol{e})
		&=-\boldsymbol{e}^{T}\bm{Q}\boldsymbol{e}+2 \boldsymbol{e}^{T} \boldsymbol{p}_{d}(k_c)\left(f(\boldsymbol{x})-\tilde{\mu}(\boldsymbol{x})\right).
		\end{align}
        Due to Lipschitz continuity, we obtain
		\begin{align}
		\dot{V}(\boldsymbol{e}) 
		& \!\leq\!-\lambda_{\min}\!(\bm{Q})\|\boldsymbol{e}\|^{2}\!+\!2\|\boldsymbol{e}\|\!\left\|\boldsymbol{p}_{d}(k_c)\right\|\!\left(L_{f}\!+\!L_{\tilde{\mu}}\right)\!\left\|\boldsymbol{x}\!-\!\boldsymbol{x}_{\mathrm{ref}}\right\|\notag \\
		&  \quad\! +\!2\|\boldsymbol{e}\|\left\|\boldsymbol{p}_{d}(k_c)\right\|\left|\left(f\!\left(\boldsymbol{x}_{\mathrm{ref}}\right)\!-\!\tilde{\mu}\left(\boldsymbol{x}_{\mathrm{ref}}\right)\right)\right|,
		\label{eq:lin_xref}
		\end{align}
		where the Lipschitz constant $L_{\tilde{\mu}}$ in \eqref{eq:gain cond} follows directly from Lipschitz continuity of the individual mean functions resulting from the $L_k$-Lipschitz kernel $k(\cdot,\cdot)$~\cite{0-5_lederer2019uniform}. Due to \cref{the:Regression Error Bound}, the error between the unknown function $f(\cdot)$ and the LoG-GP mean $\tilde{\mu}(\cdot)$ can be bounded, such that we obtain
		\begin{align} 
		\dot{V}(\boldsymbol{e}) 
		& \leq -\left(\lambda_{\min}(\bm{Q})-2\left\|\boldsymbol{p}_{d}(k_c)\right\|\left(L_{f}+L_{\tilde{\mu}}\right)\right)\|\boldsymbol{e}\|^{2} \notag \\
		&  \quad\, +2\|\boldsymbol{e}\|\left\|\boldsymbol{p}_{d}(k_c)\right\| \eta\left(\boldsymbol{x}_{\mathrm{ref}}\right).
		\end{align}
		Due to \eqref{eq:gain cond}, the Lyapunov derivative is negative for 
		\begin{align}\label{eq:Lyapdecr}
		\|\boldsymbol{e}\| > \frac{2\left\|\boldsymbol{p}_{d}(k_c)\right\|   \eta\left(\boldsymbol{x}_{\mathrm{ref}}\right)}{\lambda_{\min}(\bm{Q})-2\left\|\boldsymbol{p}_{d}(k_c)\right\|\left(L_{f}+L_{\tilde{\mu}}\right)}.
		\end{align}
		Since the ultimately bounded set is given by the smallest sub-level set of $V(\cdot)$ which contains the ball defined through \eqref{eq:Lyapdecr}, we can directly determine it as $\{\bm{e}:V(\bm{e})\leq \vartheta(\kappa_t)^2\lambda_{\min}(\bm{P})\}$ due to the quadratic structure of $V(\cdot)$. Over-approximating this set by a ball concludes the proof.
\end{proof}
	

%


\bibliographystyle{IEEEtran}
\bibliography{Bib}

\end{document}